\documentclass[aap,preprint]{imsart}

\RequirePackage[OT1]{fontenc}
\RequirePackage{amsthm,amsmath}
\startlocaldefs
\numberwithin{equation}{section}
\theoremstyle{plain}
\endlocaldefs

\usepackage[utf8]{inputenc}
\usepackage{amsfonts}
\usepackage{amssymb}
\usepackage{array}
\usepackage{bm}
\usepackage{multirow}
\usepackage{subfig}
\usepackage{enumitem}
\usepackage[normalem]{ulem}
\usepackage[font=small]{caption}
\usepackage{graphicx}
\usepackage{mathtools}
\usepackage{xspace}

\usepackage{tikz}
\usetikzlibrary{arrows,matrix,positioning,fit,calc}

\RequirePackage[colorlinks,citecolor=blue,urlcolor=blue]{hyperref}

\usepackage{algorithm}
\usepackage{algpseudocode}

\algnewcommand{\IIf}[1]{\State\algorithmicif\ #1\ \algorithmicthen}
\newtheorem{theorem}{Theorem}

\newtheorem{lemma}{Lemma}

\newtheorem{remark}{Remark}

\makeatletter
\def\blx@maxline{77}
\makeatother

\newcommand{\ex}{\mathbb E}

\newcommand{\pr}{\mathbb P}

\newcommand{\var}{\text{var}}
\newcommand{\ones}{\mathbf 1}

\newcommand{\tI}{\tilde I}
\newcommand{\tS}{\tilde S}

\begin{document}
	
	\begin{frontmatter}
		\title{Adaptive Randomization in Network Data}
		\runtitle{Adaptive Randomization in Network Data}
		\begin{aug}
	\author[A]{\fnms{Zhixin}\snm{Zhou}, \ead[label=e1]{zhixzhou@cityu.edu.hk}}
	\author[B]{\fnms{Ping} \snm{Li}\ead[label=e2]{liping98@baidu.com}}
	\and
	\author[C]{\fnms{Feifang}\snm{Hu}
		\ead[label=e3]{feifang@email.gwu.edu}}
	\address[A]{
		Department of Management Sciences, City University of Hong Kong,
		\printead{e1}}

	\address[B]{Baidu Research, USA,
		\printead{e2}}

	\address[C]{
		Department of Statistics, George Washington University,
		\printead{e3}}
\end{aug}
		
		\begin{abstract}
			Network\footnotetext{The work of Zhixin Zhou was conducted while he was a Postdoctoral Researcher at Baidu Research - Bellevue. The work of Feifang Hu was conducted while he was a consulting researcher at Baidu Research.} data have appeared frequently in recent research. For example, in comparing the effects of different types of treatment, network models have been proposed to improve the quality of estimation and hypothesis testing. In this paper, we focus on efficiently estimating the average treatment effect using an adaptive randomization procedure in networks. We work on models of causal frameworks, for which the treatment outcome of a subject is affected by its own covariate as well as those of its neighbors. Moreover, we consider the case in which, when we assign treatments to the current subject, only the subnetwork of existing subjects is revealed. New randomized procedures are proposed to minimize the mean squared error of the estimated differences between treatment effects. In network data, it is usually difficult to obtain theoretical properties because the numbers of nodes and connections increase simultaneously. Under mild assumptions, our proposed procedure is closely related to a time-varying inhomogeneous Markov chain. We then use Lyapunov functions to derive the theoretical properties of the proposed procedures. The advantages of the proposed procedures are also demonstrated by extensive simulations and experiments on real network data.
		\end{abstract}
		
		\begin{keyword}
			\kwd{Adaptive randomization design}
			\kwd{Lamperti process}
			\kwd{Time-varying Markov process}
			\kwd{Lyapunov function}
			\kwd{Network-correlated outcomes}
		\end{keyword}
		
	\end{frontmatter}
	
	\section{Introduction}\label{sec:intro}
	
	Evaluation of the effects of different types of treatment is gaining significant attention in social media development, online advertising and clinical testing. The outcome for each subject may depend not only on the treatment allocation, but also the subjects' covariates and the connections between subjects. Random treatment assignment methods often generate unbalanced prognostic factors. In the situation where the covariates are the observed categorical or numerical variables in fixed dimensions, sequential treatment assignment is introduced in~\cite{pocock1975sequential} to address the issue of unbalancedness. In~\cite{wei1978application}, the author generalizes the idea of sequential design by proposing a marginal urn model.  Adaptive randomization methods are studied in~\cite{hu2012asymptotic, hu2012balancing}, and show promising performance in categorical covariate balance with theoretical guarantees. Pairwise sequential randomization is investigated in~\cite{qin2016pairwise} to reduce the Mahalanobis distance of continuous variables.
	
	In the past decade, the presence of networks in social media, clinical tests and biological experiments has received attention from statisticians~\cite{wasserman1994social, borgatti2009network, carrington2005models, bickel2009nonparametric}. In causal inference studies, the behavior of one individual may be correlated with the behaviors of other individuals, namely peer effects or social interaction~\cite{manski2000economic, aral2011creating, eckles2017design}. In online social media networks, the behavior of a given user may be similar to his or her friends, as they might share correlated factors. Hence, in causal inference and clinical studies, we assume that if two subjects are connected in the network, then their hidden covariates affect each other's outcomes. To be more precise, we consider \emph{network-correlated outcomes}, where the network informs the correlations among potential outcomes because the potential outcomes of subject $i$ depend on both its own covariates and those of its neighbors in the network~\cite{manski2013identification, basse2018model}. Furthermore, we assume the potential outcome of a certain subject is not affected by the assignment of treatments to other subjects~\cite{cox1958planning}. That is, there is no interference between subjects~\cite{aral2016networked}. In addition, we consider another realistic assumption, which is similar to that proposed in~\cite{wei1978application}: we assume subjects appear singly and must be treated immediately. In other words, when we decide the treatment assigned to the current subject, only the connections between this subject and the previous subjects are observed; we observe only the sub-adjacency matrix for those subjects observed in the current stage. Rerandomization is proposed in~\cite{morgan2012rerandomization} and generalized to network data by~\cite{basse2018model}; however, their approach requires the whole network to be revealed before deciding the treatment of the first subject. To resolve this issue, here, we generalize adaptive design methods~\cite{hu2012asymptotic, hu2012balancing} to decide treatment allocation sequentially. It is worth noting that the adaptive design method has not previously been considered in network models. Moreover, the performance analysis of the existing adaptive randomization method cannot be applied to the model considered in this paper.
	
	Assuming the observations are network-correlated and sequentially obtained, this paper focuses on improving the estimation of treatment effects by reducing the \emph{imbalance measurement}. We still aim to reduce the effect of prognostic factors by the pairwise sequential randomization method proposed in~\cite{qin2016pairwise}. Under the assumption of network-correlated outcomes, and supposing the network is observed sequentially, we first derive the formula for variance of treatment effects under certain statistical assumptions, then we show that our approach reduces the imbalance measurement empirically and theoretically under some reasonable assumptions on the network. Despite the popularity of the model in~\cite{manski2013identification, basse2018model}, no previous work has analytically evaluated the variance of the estimator in this model with mathematical verification. To the best of our knowledge, this paper is the first work to provide a theoretical verification for the performance of randomization procedures on models assuming network-correlated outcomes.
	
	In the literature, it is assumed that covariates are identically and independently distributed (i.i.d.), and the number of covariates is fixed, hence turning the imbalance measurement of the adaptive randomization procedure into a Markov process. It is shown in~\cite{hu2012asymptotic} that the Markov process is recurrent when the covariates are categorical variables.  To formulate a theoretical analysis of the proposed procedure of this paper, we assume the observed network follows the Erd\H os-R\' enyi random graph model. The analysis does not follow from previous work on adaptive design, in the following sense. As we observe a network with extra nodes, the number of possible neighbors of each individual increases simultaneously. Moreover, because the Erd\H os-R\' enyi random graph is a probabilistic model for undirected graphs, the entries of the adjacency matrix are not independent. To overcome these difficulties, we analyze this stochastic process as a Lamperti problem~\cite{lamperti1960criteria} and further derive the upper bound of the expectation of imbalance measurement by computing certain Lyapunov functions~\cite{menshikov2016non}. In our model, as more and more subjects join the experiments, the dimension of states changes over time progresses. Thus, this process can be approximated as a time-varying Markov process. The generalization from fixed dimension to increasing dimension is a novel extension in Markov models.\par
	
	This article is organized as follows. We introduce the network-correlated outcome model and our proposed procedure in Section~\ref{sec:model}. Theoretical properties under the Erd\H os-R\' enyi random graph model are presented in Section~\ref{sec:theory}. In Section~\ref{sec:gaussian}, we discuss the theoretical properties that arise when we replace the random graph model with a Gaussian orthogonal ensemble. Experiments on simulated and real network data are presented in Section~\ref{sec:exp}. We conclude in Section~\ref{sec:conclusion}, where possible future works are also discussed. Proofs of the main theorems and auxiliary lemmas appear in Section~\ref{sec:proof}.
	
	\newpage
	
	{Here, we briefly introduce the notation used in this paper. $ X^n$ is the set of vectors with entries belonging to $X$, where $X$ can be any subset of real numbers. Similarly, $X^{m\times n}$ is the set of $m\times n$ matrices with entries belonging to $X$. For $A\in S^{m\times n}$, $A_{i*}\in X^{n}$ is the $i$-th row of matrix $A$. For vector $a$, $\|a\|$ represents the $\ell^2$-norm of vector $a$. $a_{i:j} =(a_i, a_{i+1}, \dots, a_{j})$ for $i<j$. Similarly, for matrix $A$, $A_{i:j, k:l}$ is the submatrix formed by rows $i, i+1, \dots, j$ and columns $k, k+1, \dots, l$. In particular, we write $A^{(i)}=A_{1:i, 1:i}$ as the upper-left submatrix. }

	\section{Model Assumptions}\label{sec:model}
	
	We focus on two treatment groups (treatment 0 and treatment 1) assigned to a finite population of $n$ subjects.  Let $T\in \mathbb \{0,1\}^n$ be the treatment assignment vector. $T_i$ records the assignment of the $i$-th subject, that is, $T_i=0$ for treatment $0$ and $T_i=1$ for treatment $1$. The relationship between nodes is recorded by an undirected network, or equivalently, a symmetric binary adjacency matrix $A\in\{0, 1\}^{n\times n}$. We assume self loops always exist, i.e., $A_{ii} = 1$ for $i \in [n]$. We recall that $A_{i*}$ is the $i$-th row of adjacency matrix $A$. Given the treatment assignment $T_i$, the observed outcome of the $i$-th subject follows the distribution
	\begin{align}\label{eq:model:assump}
	X_i = \mu_0 (1-T_i) + \mu_1 T_i + A_{i*}Z + \varepsilon_i
	\hspace{0.13in} \text{where} \hspace{0.13in}
	Z\sim \mathcal N(0, \sigma_Z^2 I_{n}) \text{ and } \varepsilon_i\sim \mathcal N(0, \sigma_\varepsilon^2).
	\end{align}
	We assume $\varepsilon_i$ are i.i.d. for $i\in[n]$. The observation is the summation of three parts.
	\begin{itemize}
		\item[1.] $\mu_0 (1-T_i) + \mu_1 T_i $ is the treatment effect, where $\mu_0$ and $\mu_1$ are the effect sizes of the corresponding treatments. We note that the outcome has the expectation $\ex[X_i] = \mu_0$ if $T_i=0$, otherwise its expectation is $\mu_1$.
		\item[2.] The outcome of the $i$-th observation is also affected by its unknown covariate $Z_i$ and the covariates of its neighbors in the network. To be precise, let $N_i$ be the set of neighbors of $i$, and recall that $A_{ii}=1$, then $A_{i*}Z = Z_i+\sum_{j: j\in N_i} Z_j$. We assume the covariates $Z$ have zero mean, {so the outcome can be positively or negatively influenced by the covariates}.
		\item[3.] $\varepsilon_i$ is random noise in each observation. We also write $\varepsilon:=(\varepsilon_1, \dots, \varepsilon_n)^\top$, which follows the distribution $\mathcal N(0, \sigma_{\varepsilon}^2 I_n)$.
	\end{itemize}

	Following previous studies, to ensure the treatment groups are unbiased, we restrict $T_{2m-1}+T_{2m}=1$. That is,  $(T_{2m-1}, T_{2m})$ is either $(0,1)$ or $(1,0)$. For notational convenience, we assume the total number of subjects $n$ is even. Hence we have an estimator of $\mu_0-\mu_1$, defined as
	\begin{align*}
	W:=&\frac 2n \sum_{i=1}^n (1-T_i)X_i-T_iX_i
	= \mu_0 - \mu_1 + \frac 2n \sum_{i=1}^n (1-T_i) (A_{i*}Z+\varepsilon_i) - T_i (A_{i*} Z+\varepsilon_i)  \\
	=& \mu_0 - \mu_1 + \frac 2n ({\bf 1}_n-2T)^\top (AZ+\varepsilon).
	\end{align*}
	For a fixed adjacency matrix $A$ and an allocation vector $T$, it is not difficult to check that the estimator is unbiased, as $Z$ and $\varepsilon$ have zero means:
	\begin{align*}
	\ex[W] = \mu_0 - \mu_1 +  \frac 2n ({\bf 1}_n-2T)^\top \ex[AZ+\varepsilon]
	= \mu_0 - \mu_1 +  \frac 2n ({\bf 1}_n-2T)^\top A (\ex[Z]+\ex[\varepsilon])
	=\mu_0-\mu_1.
	\end{align*}
	We can also compute the variance of $W$:
	\begin{align*}
	\text{var}[W]
	= \frac 4{n^2}\var[ ({\bf 1}_n-2T)^\top (AZ+\varepsilon) ]
	=\frac 4{n^2}\| A({\bf 1}_n-2T)\|^2 \sigma^2_Z + \frac 4n \sigma_{\varepsilon}^2,
	\end{align*}
	where $\| \cdot \|$ denotes the $\ell^2$-norm throughout this paper. We note that $W$ is an unbiased estimation and the term $4\sigma_{\varepsilon}^2/n$ converges to 0 as $n\to\infty$, so the best strategy in this experiment is to reduce the term $\| A({\bf 1}_n-2T)\|^2$ by assigning an appropriate treatment to each pair of subjects. As the variance of estimator $W$ decreases, the hypothesis testing on the effectiveness of the treatment becomes more powerful. We assume each pair of subjects joins the experiment sequentially, and we need to decide their treatment assignment soon after they join. In pairwise sequential randomization~\cite{qin2016pairwise, ma2019statistical}, we assign different treatments to each pair of subjects simultaneously. In the $m$-th stage, we determine the treatment assignments to the $(2m-1)$-th and the $2m$-th subjects, which may depend on two factors. First, after the first $2m$ subjects join the experiment, we only observe the connection between these subjects, while all other connections are concealed. In other words, we observe the (upper-left) sub-adjacency matrix $A^{(2m)}:=(A_{ij})_{1\le i,j\le 2m}$.  Second, when we determine the assignment to the $(2m-1)$-th and $2m$-th subjects, we have the record of the assignments to the first $(2m-2)$ subjects, although we cannot update them. Therefore, given the submatrix $A^{(2m)}$ and $T_{1}, \dots T_{2m-2}$, we need to determine $T_{2m-1}$ and $T_{2m}$ to reduce the \emph{imbalance measurement}, defined as
	\begin{align}\label{eq:imbalance:measurement}
	I_{2m} = \| A^{(2m)}({\bf 1}_{2m}-2T_{1:2m})\|,
	\end{align}
	where ${\bf 1}_{2m}\in \mathbb R^{2m}$ with all entries equal to 1 and $T_{1:2m}$ consists of the first $2m$ entries of $T$. To reduce the imbalance measurement, we propose the following procedure:
	\begin{itemize}
		\item[1.] The first two subjects are randomly assigned to different treatments.
		\item[2.] Suppose $2m-2$ patients have been assigned to treatments, we define the imbalance measurement when $(T_{2m-1}, T_{2m}) = (0,1)$
		\begin{align*}
		I_{2m}^{(0, 1)} = \|A^{(2m)}({\bf 1}_{2m} - 2(T_{1:(2m-2)}^\top, 0, 1)^\top)\|,
		\end{align*}
		and in the same manner, when $(T_{2m-1}, T_{2m}) = (1,0)$, we have
		\begin{align*}
		I_{2m}^{(1, 0)} = \|A^{(2m)}({\bf 1}_{2m} - 2(T_{1:(2m-2)}^\top, 1, 0)^\top)\|.
		\end{align*}
		\item[3.] We decide $(T_{2m-1}, T_{2m})$ according to the following probabilities,
		\begin{align*}
		\pr((T_{2m-1}, T_{2m}) = (0,1)) =
		\begin{cases}
		b,	&\text{ if } I_{m}^{(0, 1)} < I_{m}^{(1, 0)}; \\
		1-b, &\text{ if } I_{m}^{(0, 1)} > I_{m}^{(1, 0)}; \\
		0.5, &\text{ otherwise. }
		\end{cases}
		\end{align*}
		Here $b \in (1/2, 1]$ is a fixed biasing probability.
		\item[4.] We repeat steps 2 and 3 until $2m\ge n-1$. If $2m=n-1$, we arbitrarily assign a treatment to subject $n$.
	\end{itemize}
	
	The general idea of this procedure can be summarized as follows. In each stage, we consider two possible assignments to $(T_{2m-1}, T_{2m})$ and compute which assignment minimizes the imbalance measurement. In pairwise sequential randomization, the assignments are either $(T_{2m-1}, T_{2m})=(0,1)$, or $(1,0)$. We use the assignment that results in the smallest imbalance measurement with the biasing probability $b\in (1/2, 1]$. It is clear that letting $b=1$ would reduce the expected imbalance measurement as far as possible, but we allow randomness in the procedure for several practical reasons. We further discuss this biasing probability in Remark~\ref{rem:biasing:probability}. Notably, the proposed procedure does not require any information on subjects joining the experiment in the future. To be more specific, the choice of treatment for subjects $2m-1$ and $2m$ depends only on their connection with previous subjects and the current imbalance measurement. {The procedure can be applied to the case when $n$ is odd, as long as we assign a random treatment to the last subject.} If $b$ is a constant greater than $1/2$, the adaptive procedure can significantly reduce the imbalance measurement under mild assumptions on the network.
	
	\begin{remark}[Biased coin design]\label{rem:biasing:probability}
		Suppose we let $b=1$ in our proposed procedure, then each pair of assignments in the procedure reduces the imbalance measurement as far as possible, and treatment allocation is completely determined by the network. However, deterministic treatment assignment is not desirable from the standpoint of (un)predictability and the principle of randomness~\cite{lewis1999statistical}, so an appropriate allocation probability $\in(1/2,1)$ should be selected. The idea of biased coin design is introduced in~\cite{efron1971forcing} for balancing the total number of different treatments. For the purpose of balancing prognostic factors between treatment groups, the authors of \cite{hagino2004statistical} suggest an allocation probability between 0.70 and 0.95 according to the sample size. In~\cite{toorawa2009use}, the authors simulate the effects of allocation probability. In this paper, we assume that $b$ can be any constant greater than 0.5 and no more than 1.
	\end{remark}
	
	\begin{remark}[Binary Integer Programming]
		Suppose the whole network is observed, the goal of reducing the imbalance measurement $I = \|A(\ones-2T)\|$ with unbiased treatment groups is equivalent to the following optimization problem:
		\begin{align*}
		\min_{x\in\{-1,1\},\  \ones^\top x = 0} \|Ax\| =
		\min_{x\in\{-1,1\},\  \ones^\top x = 0} x^\top H x,
		\end{align*}
		where $H = A^\top A=A^2$. It is not difficult to observe that $H_{ij}$ counts the number of common neighbors of node $i$ and $j$ in the adjacency matrix $A$. The constrained $1^\top x=0$ can be converted to a penalty function:
		\begin{align*}
		\min_{x\in\{-1,1\}}  x^\top Hx + \lambda(\ones^\top x)^2
		=\min_{x\in\{-1,1\}}  x^\top (H+\lambda \ones \ones^\top )x.
		\end{align*}
		This formulation is summarized as an unconstrained binary programming problem (UBQP) in ~\cite{kochenberger2014unconstrained}. The authors of that survey also mention that the UBQP is an NP-hard problem, whose proof is provided in~\cite{pardalos1992complexity}, except for some special cases with very strong assumptions on $H$~\cite{picard1976maximal, barahona1986solvable, pardalos1991graph}. $H$ in these special cases is restricted to be an adjacency matrix with certain regularization conditions, so their results cannot apply to our case $H=A^2$. In the general case, heuristic methods such as the continuous approach~\cite{pardalos2006continuous, pan2008global}, tabu search algorithms ~\cite{lu2011neighborhood, wang2013probabilistic}, and semi-definite relaxation~\cite{wang2013fast} have been proposed for finding inexact but high-quality solutions. However, it is worth noting that the setting we consider here is very different from a UBQP problem. We have to determine $x_i$ when only the upper-left $i\times i$ submatrix of $A$ is observed.
	\end{remark}
	
	\section{Theoretical Properties of the Proposed Design}\label{sec:theory}
	
	In this section, we study the asymptotic property of the imbalance measurement quantity of~\eqref{eq:imbalance:measurement} under the following  stochastic assumption on the symmetric adjacency matrix $A$. We assume for some $p\in(0,1)$,
	\begin{align}\label{eq:erdos:renyi:assumption}
	A-I \sim G(n,p), \quad \text{where } G(n,p) \text{ represents the Erd\H os-R\' enyi random graph model. }
	\end{align}
	In other words, on the diagonal of $A$, we have determinant entries $A_{ii} = 1$ for $i \in[n]$, and
	\begin{align*}
	A_{ij} = A_{ji}\sim \text{Bernoulli}(p) \text{ independently for } 1\le i<j\le n.
	\end{align*}
	In the graph sense, the Erd\H os-R\' enyi random graph model indicates that an edge between distinct nodes exists with probability $p$~\cite{erdHos1960evolution}. Under this assumption on $A$, we aim to analyze the asymptotic behavior of the imbalance measurement $I_{2m} = \| A^{(2m)}({\bf 1}_{2m}-2T_{1:2m})\|$ defined in \eqref{eq:imbalance:measurement}.  Let us also define the state after the $m$-th iteration of the procedure:
	\begin{align*}
	S_{2m} = A_{2m}({\bf 1}_{2m}-2T_{1:2m})
	\end{align*}
	so that $I_{2m} = \|S_{2m}\|$. {For convenience of notation, we let
		\begin{align}\label{eq:odd:I}
		I_{2m+1} = I_{2m} \quad \text{ for } m\in\mathbb N,
		\end{align}
		so the imbalance measurement $I_i$ can be defined for all positive integers $i$. }
	Suppose $A$ were not symmetric, i.e., $a_{ij}$ and $a_{ji}$ were i.i.d., then $\{S_i\}_{i \in\mathbb N}$ would be a \emph{time-varying Markov chain}, where the randomness comes from entrywise Bernoulli distribution and random assignments in step 3 of the procedure. In the symmetric case, we still approximately have the following Markov property:
	\begin{align*}
	\pr(S_i = x|S_1, \dots, S_{i-1})\approx  \pr(S_i = x|S_{i-1})
	\end{align*}
	We will show that the imbalance measurement $I_n$ is significantly reduced compared with random design if we apply our proposed procedure. \par
	A random design indicates that we assign $(T_{2m-1}, T_{2m}) = (0,1)$ or $(1,0)$ with probability $1/2$. In other words, we implement step 3 of our proposed design with $b=1/2$. We denote the resulting assignments by the vector $T_{\text{random}}$, then for fixed $p\in(0,1)$, we have the following theorem about random assignment.

	\begin{theorem}\label{thm:random}
		Suppose the $n\times n$ network follows the Erd\H os-R\' enyi random graph model in~\eqref{eq:erdos:renyi:assumption} with Bernoulli parameter $p$, then using random assignment, the imbalance measurement satisfies the following limit
		\begin{align}\label{eq:random:design:limit}
		\lim_{n\to\infty} \frac {\ex[\|A (1-2T_{\text{random}})\|^2]}{n^2}  = p(1-p).
		\end{align}
	\end{theorem}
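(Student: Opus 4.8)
The plan is to expand the squared norm entrywise and exploit the independence between the random graph $A$ and the random assignment $T_{\text{random}}$, thereby reducing the statement to a second-moment computation whose leading order can be read off directly. First I would set $v = \ones_n - 2T_{\text{random}}$, so that each $v_i \in \{-1,1\}$ and the unbiasedness constraint $T_{2m-1}+T_{2m}=1$ forces $v_{2m-1} = -v_{2m}$; under the random design (i.e.\ step 3 with $b=1/2$) the common sign of each pair is an independent fair coin, and crucially $v$ is independent of $A$. Writing $\|Av\|^2 = \sum_{i,j,k} A_{ij}A_{ik} v_j v_k$ and factoring expectations over the independent sources of randomness gives
\begin{align*}
\ex[\|Av\|^2] = \sum_{i,j,k} \ex[A_{ij}A_{ik}]\,\ex[v_jv_k].
\end{align*}

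Next I would evaluate the two moment factors separately. For the assignment, $\ex[v_jv_k] = 1$ when $j=k$, $\ex[v_jv_k]=-1$ when $\{j,k\}$ is one of the paired index sets $\{2m-1,2m\}$, and $\ex[v_jv_k]=0$ otherwise, since coordinates in distinct pairs are independent with mean zero. For the graph, I would use that $A_{ii}=1$ deterministically, that $A_{ij}\in\{0,1\}$ (so $A_{ij}^2=A_{ij}$), and that distinct off-diagonal entries are independent: this yields $\ex[A_{ij}A_{ik}]=p^2$ when $i,j,k$ are all distinct, and $\ex[A_{ij}A_{ik}] \in \{p,1\}$ whenever one of $j,k$ coincides with $i$.

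Then I would assemble the sum over the two nonvanishing regimes for $\ex[v_jv_k]$. The diagonal block $j=k$ contributes $\sum_{i,j}\ex[A_{ij}] = n + n(n-1)p$, while the paired block contributes $-n(n-2)p^2 - 2np$ (each pair supplies two ordered index assignments, and the $i\in\{2m-1,2m\}$ cases are where the self-loops enter). Collecting terms gives $\ex[\|Av\|^2] = n^2\,p(1-p) + O(n)$, so dividing by $n^2$ and sending $n\to\infty$ produces the claimed limit $p(1-p)$.

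I do not expect a serious obstacle; the only real care needed is the bookkeeping of the boundary cases where an index coincides with $i$, which are precisely the terms feeding on the convention $A_{ii}=1$. It is worth emphasizing that every such coincidence contributes only at order $O(n)$ and therefore washes out after normalization by $n^2$: the self-loop convention affects the lower-order correction but not the limit. The leading $n^2$ coefficient is governed entirely by the generic-position terms, where the $+n^2 p$ from the diagonal block and the $-n^2 p^2$ from the paired block combine into $p(1-p)$.
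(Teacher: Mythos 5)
Your proposal is correct and is essentially the paper's own argument in different bookkeeping: the paper reduces to a single row by exchangeability and sums per-pair variances (treating the pair containing the row index, where $A_{ii}=1$ enters, as the special case), while you expand $\|Av\|^2$ as a triple sum and factor $\ex[A_{ij}A_{ik}]\,\ex[v_jv_k]$ using the independence of $A$ and the random assignment together with the vanishing of cross-pair correlations---the same ingredients. Indeed your tally $n + n(n-1)p - n(n-2)p^2 - 2np = n^2p(1-p) + n(1-2p)(1-p)$ coincides exactly with the paper's intermediate formula $\ex[\|A(\ones_n-2T_{\text{random}})\|^2] = n^2p(1-p)+n(1-2p)(1-p)$, so the limit follows identically.
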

	
	The next theorem shows that our proposed design can significantly reduce the imbalance measurement.
	
	\begin{theorem}\label{thm:adaptive}
		Suppose the $n\times n$ network follows the Erd\H os-R\' enyi random graph model in~\eqref{eq:erdos:renyi:assumption} with Bernoulli parameter $p$, then using our proposed design, the imbalance measurement $I_n$ satisfies the following upper bound:
		\begin{align}\label{eq:adaptive:design:limit}
		\lim\sup_{n\to\infty} \frac{\ex[I_n^4]}{n^4}\le p^2(1-p)^2 - \frac 1{8}(2b-1)(2-\sqrt{2}(2b-1))^{3/2} p^{5/2}(1-p)^{5/2}.
		\end{align}
	\end{theorem}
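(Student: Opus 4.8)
The plan is to treat $I_{2m}^2=\|S_{2m}\|^2$ as the Lyapunov function of a time-varying Markov chain and to control its fourth moment through a one-step recursion read off from the block structure of $A^{(2m)}$. First I would partition $A^{(2m)}$ into the old corner $A^{(2m-2)}$, the $(2m-2)\times 2$ block $B$ of edges joining the two new subjects to the old ones, and the $2\times 2$ corner $C$. Setting $g$ equal to the difference of the two columns of $B$ and $\eta=\langle S_{2m-2},g\rangle$, choosing $(T_{2m-1},T_{2m})$ is equivalent to choosing a sign $\epsilon_m\in\{-1,+1\}$ that makes the first $2m-2$ coordinates of $S_{2m}$ equal to $S_{2m-2}+\epsilon_m g$. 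Because Step~3 selects the norm-reducing sign with probability $b$, we obtain $\ex[\epsilon_m\eta\mid\mathcal F_{2m-2},g]=-(2b-1)|\eta|$, and hence the decomposition $I_{2m}^2=I_{2m-2}^2+2\epsilon_m\eta+\|g\|^2+L_m$, where $L_m$ gathers the two new coordinates and depends on $\epsilon_m$ only through lower-order terms. Taking expectations and specializing to $b=1/2$ should reproduce the growth rate behind Theorem~\ref{thm:random}: each step adds about $8p(1-p)m$ to $\ex[I_{2m}^2]$, so that $\ex[I_n^2]/n^2\to p(1-p)$, which fixes the normalization.

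Next I would lift the recursion to the fourth power using $I_{2m}^4-I_{2m-2}^4=2I_{2m-2}^2\,(I_{2m}^2-I_{2m-2}^2)+(I_{2m}^2-I_{2m-2}^2)^2$. Conditioning on $\mathcal F_{2m-2}$ and averaging over the fresh edges $B,C$ (which are independent of the past) and over the coin, the leading contribution at stage $m$ is of order $n^3$ and equals, up to lower-order terms, $16p(1-p)\,m\,I_{2m-2}^2-4(2b-1)\,I_{2m-2}^2\,\ex[|\eta|\mid\mathcal F_{2m-2}]$. Summed over $m\le n/2$, the first (random-design) term integrates to $p^2(1-p)^2 n^4$, giving the leading term of~\eqref{eq:adaptive:design:limit}; the second term is the adaptive gain, which must be shown to be strictly positive and of the stated size.

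The heart of the argument is a lower bound on $\ex[|\eta|\mid S_{2m-2}]$, which is exactly where an upper bound on $\ex[I_n^4]$ demands a lower bound, since $|\eta|$ enters with a minus sign. Writing $s=S_{2m-2}$ and using that the entries of $g$ are independent, mean-zero, $\{-1,0,1\}$-valued with variance $2p(1-p)$ and independent of $s$, I would compute $\ex[\eta^2\mid s]=2p(1-p)\|s\|^2$ and, from $g_j^4=g_j^2$ together with $\sum_j s_j^4\le\|s\|^4$, an upper bound $\ex[\eta^4\mid s]\le C\,p(1-p)\,\|s\|^4$. The elementary inequality $\ex|\eta|\ge(\ex\eta^2)^{3/2}/(\ex\eta^4)^{1/2}$ then produces $\ex[|\eta|\mid s]\ge\gamma\,\|s\|$ for an explicit $\gamma$; this single step is responsible for the exponent $3/2$ and the scale $p^{5/2}(1-p)^{5/2}$ in~\eqref{eq:adaptive:design:limit}, and the factor $(2-\sqrt{2}(2b-1))^{3/2}$ emerges once the base second moment $2p(1-p)\|s\|^2$ is corrected for the bias carried by the selected state along the trajectory.

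Finally I would close the recursion for $F_m=\ex[I_{2m}^4]$. After using a concentration estimate to replace $\ex[I_{2m-2}^3]$ by $F_{m-1}^{3/4}$ (so the moment hierarchy closes), the normalized profile $f(t)=\lim_n F_{\lfloor nt/2\rfloor}/n^4$ should satisfy a nonlinear ODE $f'(t)=4p(1-p)\,t\,\sqrt{f}-2(2b-1)\gamma\,f^{3/4}$ with $f(0)=0$, equivalently $\psi=\sqrt{f}$ solving $\psi'(t)=2p(1-p)\,t-(2b-1)\gamma\sqrt{\psi}$; a power-law ansatz solves it, and evaluating $f(1)=\psi(1)^2$ yields the right-hand side of~\eqref{eq:adaptive:design:limit}, with the bracketed factor coming out of the algebra of the self-consistent solution. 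I expect the \textbf{main obstacle} to be precisely the lower bound on $\ex[|\eta|\mid S_{2m-2}]$: it is an anti-concentration statement for a random inner product whose second argument is generated adaptively and is correlated with the symmetric matrix $A$, so only the approximate Markov property stated before Theorem~\ref{thm:random} is available, and the dimension grows with $m$. Controlling $\ex[\eta^4]$, that is, the spread $\sum_j s_j^4$ of the state, uniformly along the trajectory is the delicate point; the $\limsup$ rather than a limit in the statement reflects the slack accumulated in these inequalities and in the concentration step.
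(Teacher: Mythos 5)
Your skeleton is essentially the paper's: your $g$ and $\eta$ are the paper's $Y_m$ and $\tS_m^\top Y_m$, the one-step drift $8mp(1-p)$, the fourth-moment recursion with leading term $16mp(1-p)\ex[\tI_m^2]$ minus a gain of the form $4(2b-1)\ex[\|\tS_m\|^2|\eta|]$, and the moment hierarchy are all there. But your closure of the hierarchy is the genuine gap. Since the gain enters with a minus sign, you need a \emph{lower} bound on $\ex[I_{2m}^3]$, and your plan to ``replace $\ex[I_{2m-2}^3]$ by $F_{m-1}^{3/4}$'' runs against Jensen: $\ex[I^3]\le\ex[I^4]^{3/4}$, so $F^{3/4}$ is an upper bound, and the reverse inequality is precisely an unproven concentration statement. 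The paper never needs it: it first gets the upper bound $\ex[\tI_m]\le 2m\sqrt{p(1-p)}+O(1)$, feeds it together with the Cauchy--Schwarz bound $\ex[|U_m-V_m|\,|\,\tS_m]\le 2\|\tS_m\|\sqrt{2p(1-p)}+4\sqrt{mp}(1-p)$ into the biased recursion to obtain the second-moment lower bound $\ex[\tI_m^2]\ge(4-2\sqrt 2(2b-1))m^2p(1-p)+O(m^{3/2})$, and then uses the valid direction of Jensen, $\ex[\tI_m^3]\ge(\ex[\tI_m^2])^{3/2}$. That is exactly where $(2-\sqrt 2(2b-1))^{3/2}$ comes from (your guess about its provenance is right), and afterwards a plain discrete summation over $m^3$ finishes; no ODE or self-consistent profile is needed. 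A minor related point: step 3 compares $U_m$ with $V_m$, which include the terms in $Z_{2m+1},Z_{2m+2}$, so the selected sign is not exactly $-\mathrm{sign}(\eta)$; the paper works with $\ex[|U_m-V_m|]$ and $\ex[|U_m^2-V_m^2|]$ directly and disposes of the extra terms as lower order, which your $L_m$ bookkeeping would also have to do explicitly.

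On the anti-concentration step, your assessment of the difficulty is inverted. What you call the main obstacle is not one: conditionally on $\tS_m$, the fresh columns involve entries of $A$ disjoint from $A^{(2m)}$, so $g$ is \emph{exactly} independent of the state (no approximate-Markov argument is needed), and $\sum_j s_j^4\le\|s\|^4$ holds deterministically, so no trajectory control of the spread is needed either. The real issue is the constant. Your Paley--Zygmund route is valid: with $\ex[\eta^2\,|\,s]=2p(1-p)\|s\|^2$ and $\ex[\eta^4\,|\,s]\le 2p(1-p)\bigl(1+6p(1-p)\bigr)\|s\|^4$ it gives $\ex[|\eta|\,|\,s]\ge \gamma\|s\|$ with $\gamma=2p(1-p)/\sqrt{1+6p(1-p)}$, which is strictly smaller than $2p(1-p)$ for every $p\in(0,1)$. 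The paper instead proves the Khinchin--Kahane-type Lemma~\ref{lem:kahane} by induction on the dimension, yielding $\inf_{\|x\|=1}\ex[|x^\top Y|]\ge 2p(1-p)$, and this sharper constant is what propagates into the coefficient $\frac 18(2b-1)(2-\sqrt 2(2b-1))^{3/2}$ of \eqref{eq:adaptive:design:limit}. With your $\gamma$ the entire argument goes through but the gain term acquires an extra factor $(1+6p(1-p))^{-1/2}$, so you would prove the theorem only with a degraded constant, not the stated one. Substituting Lemma~\ref{lem:kahane} for Paley--Zygmund and repairing the third-moment closure as above turns your outline into the paper's proof.
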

	
	\begin{remark}\label{rem:adaptive}
		Theorem~\ref{thm:adaptive} provides the upper bound of the fourth moments of the imbalance measurement. Because $\ex[I_n^4]\ge \ex[I_n^2]^2$, we immediately obtain an upper bound of the second moment $\ex[I_n^2]$. For fixed $p\in(0,1)$ and $b\in(1/2,1]$,
		\begin{align*}
		\lim\sup_{n\to\infty} \frac{\ex[I_n^2]}{n^2}<p(1-p).
		\end{align*}
		Hence the proposed procedure provides a strictly smaller imbalance measurement than random design in expectation. Suppose $b=1/2$, i.e., $2b-1=0$, then the proposed method is identical to random design. As a result, the second term of~\eqref{eq:adaptive:design:limit} vanishes. Meanwhile, suppose the network is very sparse, that is, $p$ is very small, then the reduction of imbalance measurement by the proposed design is not very great, because $p^{5/2}$ is much smaller than $p^2$.
	\end{remark}

	\section{Discussion on the Gaussian Case}\label{sec:gaussian}
	
	In previous sections, we assumed the network followed the Erd\H os-R\' enyi random graph model. As discussed in Remark~\ref{rem:adaptive},  Theorem~\ref{thm:adaptive} has not shown that the reduction of imbalance measurement is asymptotically smaller than the imbalance measurement itself if $p\to 0$. To discuss whether the reduction is rate optimal, we investigate a weighted adjacency matrix with Gaussian entries. Specifically, in Wigner matrix $A$, we have determinant entries $A_{ii} = 1$ for $i \in[n]$, and
	\begin{align*}
	A_{ij} = A_{ji}\sim \mathcal N(0, \sigma^2) \text{ independently for } 1\le i<j\le n.
	\end{align*}
	In other words, we consider the Gaussian orthogonal ensemble (GOE) instead of the Erd\H os-R\' enyi random graph model. This assumption corresponds to the following scenario. We observe a weighted network in which the weights can be either positive or negative. Under assumption~\eqref{eq:model:assump}, the observation $X_i$ is still well defined. In this case, the unknown covariate $Z_j$ can affect the $i$-th observation $X_i$ positively or negatively, depending on the weight $A_{ij}$. Under this assumption, the proposed procedure in Section~\ref{sec:model} is still valid. If we adopt the definition of imbalance measurement $I_n$, we have the following asymptotic upper bound.
	
	\begin{theorem}\label{thm:GOE}
		Suppose the $n\times n$ weighted network follows the GOE with variance $\sigma^2$ where $\sigma$ depends on $n$. Assuming $\sigma=O(1)$ and $n\sigma^2\to\infty$, then using our proposed design, the imbalance measurement $I_n$ satisfies the following upper bound:
		\begin{align*}
		\lim\sup_{n\to\infty} \frac{\ex[I_n^4]}{n^4\sigma^4 }
		\le 1- \frac 14(2b-1)\sqrt{2/\pi}(4-\sqrt {2/\pi}(2b-1))^{3/2}.
		\end{align*}
	\end{theorem}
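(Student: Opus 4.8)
The plan is to run the one-step drift analysis behind Theorem~\ref{thm:adaptive}, replacing the Bernoulli entries by Gaussian ones. The key simplification in the GOE case is that the sign-dependent inner product is \emph{exactly} Gaussian conditionally on the past, so no central-limit approximation is needed and the constant $\sqrt{2/\pi}=\ex|Z|$ (with $Z\sim\mathcal N(0,1)$) enters directly. I would first set up the transition from the $(m-1)$-th to the $m$-th stage. Writing $x=\ones-2T$ with $\pm1$ entries and using the constraint $x_{2m-1}=-x_{2m}=:s\in\{\pm1\}$, the block decomposition of $A^{(2m)}$ into $A^{(2m-2)}$, the two fresh columns $b_1,b_2\in\mathbb R^{2m-2}$ (off-diagonal GOE entries, i.i.d. $\mathcal N(0,\sigma^2)$), and the corner $(\begin{smallmatrix}1&c\\ c&1\end{smallmatrix})$ gives
\[
I_{2m}^2=\|S_{2m-2}\|^2+R+2sG ,
\]
where $R=\|u\|^2+\|v\|^2+\|w\|^2\ge 0$ collects the sign-independent increments ($u=b_1-b_2$, $v=B^\top x_{1:2m-2}$, $w=(1-c)(1,-1)^\top$) and $G=\langle S_{2m-2}+(1-c)x_{1:2m-2},\,u\rangle$ is the only sign-dependent term.

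The crucial point is that, conditionally on $\mathcal F_{2m-2}$ (and $c$), the vector $u$ is fresh Gaussian independent of $S_{2m-2}$ and $x$, so $G\mid\mathcal F_{2m-2}\sim\mathcal N\!\big(0,\,2\sigma^2\|S_{2m-2}+(1-c)x_{1:2m-2}\|^2\big)$ exactly. Under $\sigma=O(1)$ and $n\sigma^2\to\infty$ the term $\|S_{2m-2}\|^2=I_{2m-2}^2$ dominates both $(1-c)^2\|x_{1:2m-2}\|^2$ and $2(1-c)\langle S_{2m-2},x_{1:2m-2}\rangle$ inside the variance, so the standard deviation of $G$ is $\sqrt{2}\,\sigma I_{2m-2}(1+o(1))$. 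The biased choice of $s$ (toward $-\mathrm{sign}(G)$ with probability $b$) then yields $\ex[sG\mid\mathcal F_{2m-2}]=-(2b-1)\ex[|G|\mid\mathcal F_{2m-2}]=-(2b-1)\sqrt{2/\pi}\,\sqrt2\,\sigma\,I_{2m-2}(1+o(1))$, using $\ex|Z|=\sqrt{2/\pi}$. This is the mechanism producing the $\sqrt{2/\pi}$ and $(2b-1)$ factors in the stated constant.

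Next I would form the fourth-moment drift. Squaring, $I_{2m}^4=(P+R)^2+4sG(P+R)+4G^2$ with $P=I_{2m-2}^2$; taking $\ex[\cdot\mid\mathcal F_{2m-2}]$, the two leading contributions are the increase $2P\,\ex[R]\approx 16m\sigma^2 I_{2m-2}^2$ and the reduction $4P\,\ex[sG]\approx-\tfrac{8(2b-1)}{\sqrt\pi}\sigma\,I_{2m-2}^3$, while $\ex[R^2]$, $\ex[G^2]$, and the cross term $\ex[sRG]$ are of strictly smaller order in $m$ under the stated scaling. Writing $J_{2m}=\ex[I_{2m}^4]$, this gives a Lamperti-type difference inequality
\[
J_{2m}-J_{2m-2}\le 16m\sigma^2\,\ex[I_{2m-2}^2]-\tfrac{8(2b-1)}{\sqrt\pi}\sigma\,\ex[I_{2m-2}^3]+\text{(lower order)} .
\]
To close it I would establish concentration of $I_{2m}^2$ around its mean at scale $(2m)^2\sigma^2$ — this is exactly where $n\sigma^2\to\infty$ is used — so that $\ex[I_{2m-2}^2]\sim J_{2m-2}^{1/2}$ and $\ex[I_{2m-2}^3]\sim J_{2m-2}^{3/4}$. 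Substituting the ansatz $J_{2m}\sim\gamma\,(2m)^4\sigma^4$ and matching the $m^3\sigma^4$ coefficients reduces everything to a scalar fixed-point equation for $\gamma$, whose solution, after simplification, is the right-hand side of the theorem; Remark~\ref{rem:adaptive}'s inequality $\ex[I^4]\ge\ex[I^2]^2$ then transfers the bound to the second moment.

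The main obstacle is the concentration/moment-closure step: proving that $I_{2m}^2/\big((2m)^2\sigma^2\big)$ is self-averaging uniformly in $m$ so as to justify $\ex[I^3]\approx(\ex[I^4])^{3/4}$. Without it the negative reduction term cannot be controlled from below and one recovers only the trivial random-design value $\gamma\le 1$. Secondary difficulties are (i) verifying that the many lower-order contributions — the corner term $w$, the correlation between $R$ and $G$ through the shared columns $b_1,b_2$, and the not-exactly-Markov dependence created by the symmetry of $A$ — are genuinely negligible \emph{precisely} in the regime $\sigma=O(1)$ and $n\sigma^2\to\infty$ (the former keeping per-step increments small enough for the Lamperti framework, the latter making the quadratic increase and the Gaussian reduction comparable in order); and (ii) controlling the error accumulated over the $\Theta(n)$ steps so that it does not overwhelm the $O(1)$ gain captured by the constant.
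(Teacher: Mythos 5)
Your one-step analysis is the same as the paper's: the paper also reuses the $U_m,V_m$ decomposition from Theorem~\ref{thm:adaptive}, exploits that the sign-dependent term is exactly conditionally Gaussian so the folded-normal constant $\sqrt{2/\pi}$ enters directly, and gets the reduction $-(2b-1)\ex[|U_m-V_m|\,|\tS_m]$ from the biased choice. The genuine gap is in your closure of the fourth-moment recursion. You need a \emph{lower} bound on $\ex[\tI_m^3]$ (it enters with a negative sign), and Jensen/H\"older only gives the wrong direction $\ex[\tI_m^3]\le \ex[\tI_m^4]^{3/4}$; you correctly identify this, but your proposed fix --- uniform self-averaging of $\tI_{2m}^2$ at scale $m^2\sigma^2$ --- is asserted, not proved, and is far from routine: the process is an adaptively controlled, time-inhomogeneous chain (not even exactly Markov, by symmetry of $A$), and the randomized step with probability $b$ keeps injecting variance, so concentration uniform in $m$ is a substantial open claim in your write-up. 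Likewise, substituting the ansatz $J_{2m}\sim\gamma(2m)^4\sigma^4$ and ``matching coefficients'' is heuristic: a $\limsup$ bound cannot be derived by assuming the asymptotics whose existence is in question.

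The paper closes the loop without any concentration, using only unconditional one-sided moment bounds in a fixed order: (i) drop the reduction term to get $\ex[\tI_n^2]\le \bigl(2n\sigma+\frac{1}{2\sigma}\bigr)^2$ and hence, by Jensen, $\ex[\tI_n]\le 2n\sigma+\frac{1}{2\sigma}$; (ii) insert this first-moment upper bound into the drift \emph{with} the reduction term to obtain the lower bound $\ex[\tI_n^2]\ge (4-2\sqrt{2/\pi}(2b-1))n^2\sigma^2+O(n^{3/2}\sigma)$; (iii) apply Jensen in the correct direction, $\ex[\tI_n^3]\ge \ex[\tI_n^2]^{3/2}$ (convexity of $t\mapsto t^{3/2}$ applied to $\tI_n^2$); (iv) feed the upper bound on $\ex[\tI_m^2]$ into the positive term and the lower bound on $\ex[\tI_m^3]$ into the negative term of the recursion \eqref{eq:fourth:moment}. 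So your ``main obstacle'' is solved by step (ii)+(iii), not by concentration. Relatedly, you misattribute the role of the hypotheses: in the paper, $n\sigma^2\to\infty$ is used only to show that the accumulated error $O(n^{13/4}\sigma^{5/2}+n^3\sigma^2)$ is $o(n^4\sigma^4)$, not to make increments self-averaging. A final bookkeeping point: with $Y$-entries $\mathcal N(0,2\sigma^2)$ your reduction coefficient $8(2b-1)\sigma/\sqrt{\pi}$ differs by a factor $\sqrt 2$ from the paper's $4(2b-1)\sqrt{2/\pi}\,\sigma$; whichever normalization you adopt must be carried consistently through both the second-moment lower bound and the fourth-moment recursion, since the stated constant $1-\frac14(2b-1)\sqrt{2/\pi}(4-\sqrt{2/\pi}(2b-1))^{3/2}$ is exactly the self-consistent output of the paper's chain.
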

	
	In the Erd\H os-R\' enyi random graph model, the entrywise variance of the adjacency matrix is $p(1-p)$. This quantity is comparable to $\sigma^2$ in the GOE. When $\sigma\to 0$, the reduction of the imbalance measurement is still significantly large. This is a stronger result than that in the Erd\H os-R\' enyi random graph model. An essential technical reason is the lower bound of $\ex[|x^\top Y|]$ for fixed subject vector $x$ and centered random vector $Y\in\mathbb R^m$. If we assume only $Y$ is a sub-Gaussian vector, then for general $x$, we obtain the best lower bound by Khinchin-Kahane inequality, see Lemma~\ref{lem:kahane}. If we further assume $Y_i\sim\mathcal N(0,\sigma^2)$ independently, then $|x^\top Y|$ is a folded normal random variable and $\ex[|x^\top Y|] = \sigma\sqrt{n/\pi}$ for all subject vectors $x$. It is still an open problem whether the term $p^{5/2}$ can be improved to $p^2$. An empirical comparison of these two cases can be found in the next section.

	\section{Experiments} \label{sec:exp}
	
	In this section, we empirically study the behavior of imbalance measurement in \eqref{eq:imbalance:measurement}. The experiments demonstrate that our proposed algorithm improves the estimation of treatment effects for both simulated data and real network data.
	
	\subsection{Experiments on Simulated Network Data}
	
	\begin{figure}[h!]
		\includegraphics[width=2.5in]{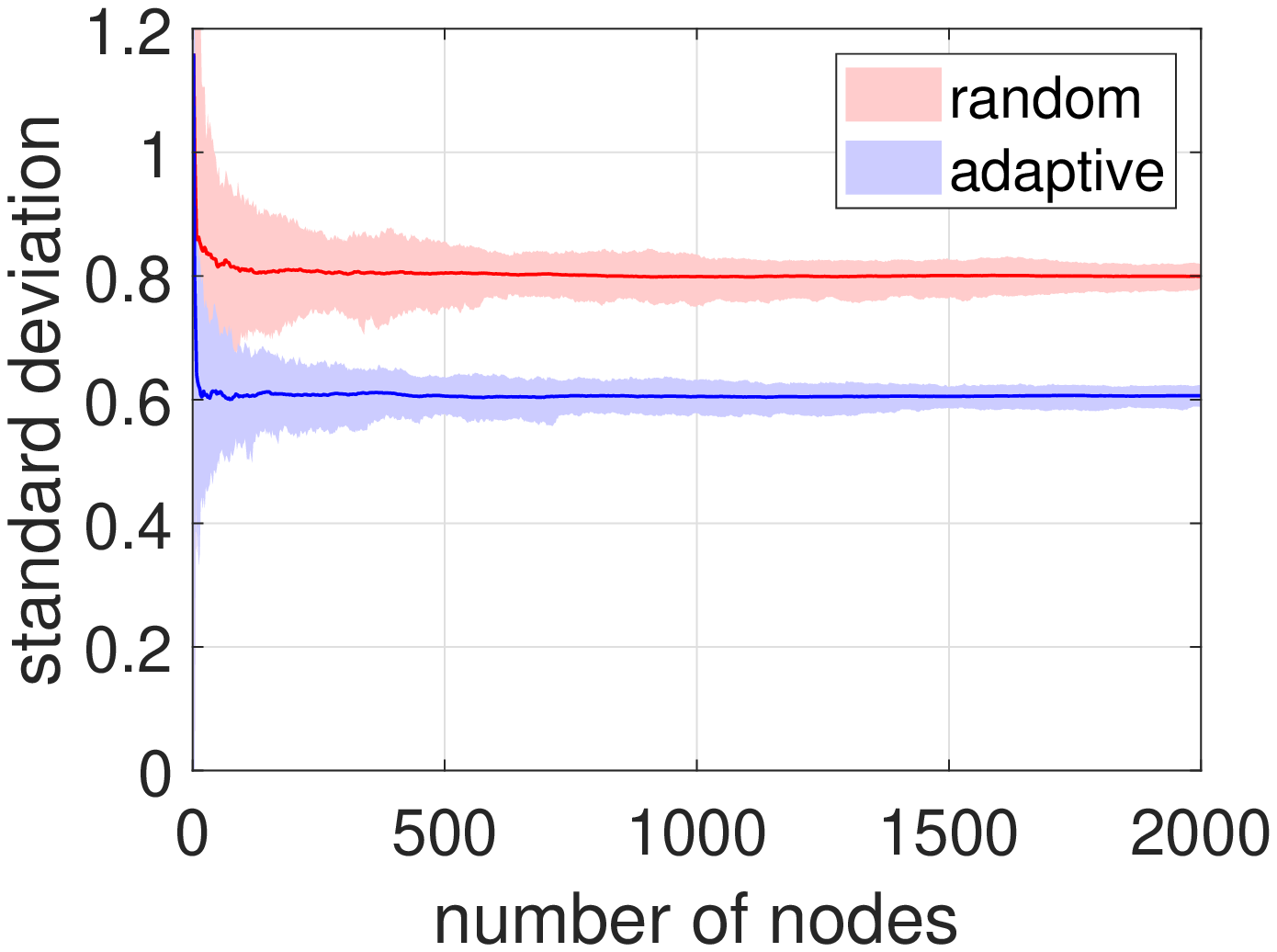}
		\includegraphics[width=2.5in]{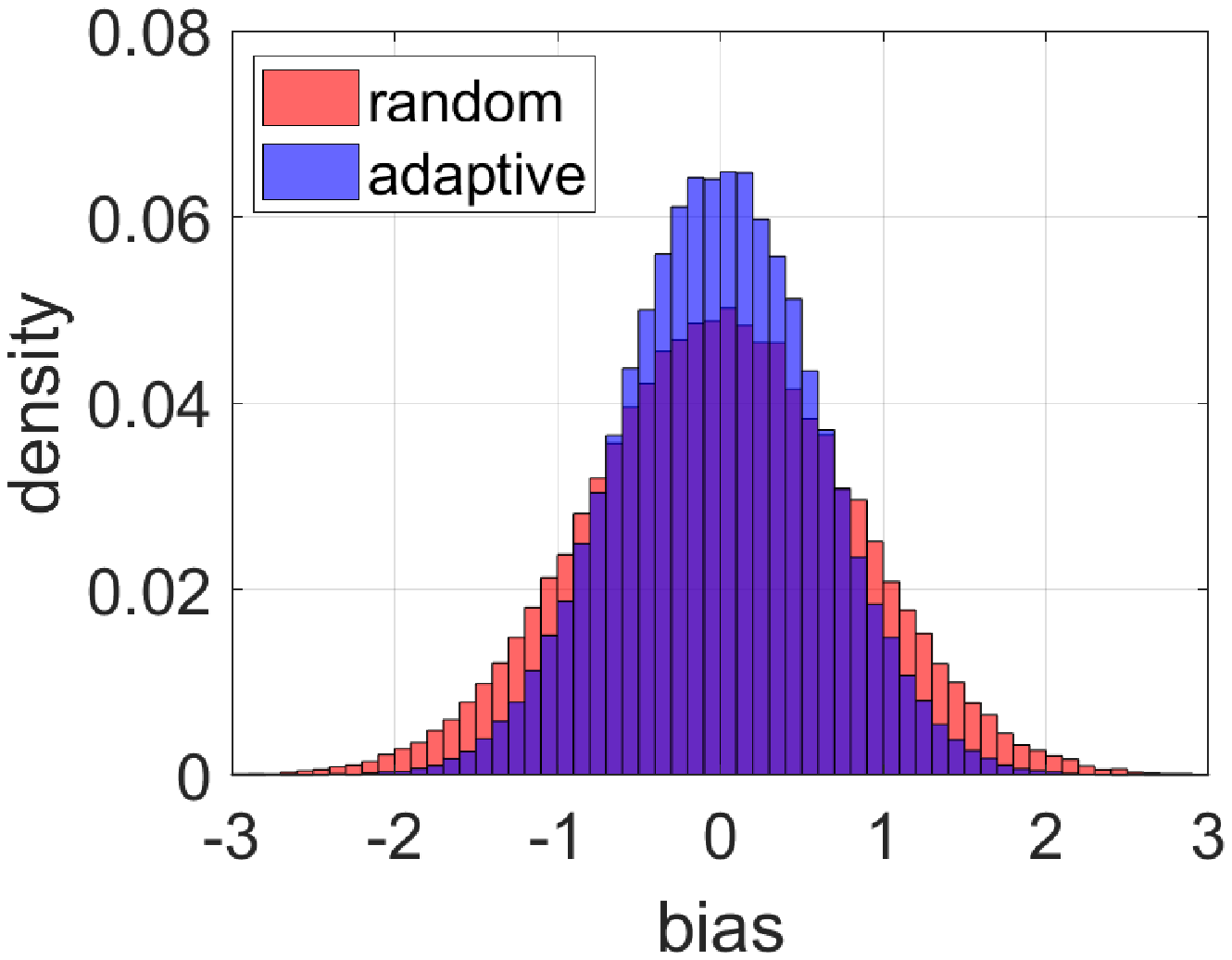}
		\caption{Left: the standard deviation of $W$ in Section~\ref{sec:model} for different $n$. Right: the histogram of $W$ when $\mu_0=\mu_1$ and $\sigma_\varepsilon=1$ }
		\label{fig:one}
	\end{figure}
	The plots in Figure~\ref{fig:one} show the result of the Erd\H os-R\' enyi random graph model in~\eqref{eq:erdos:renyi:assumption}. We fix $p=0.2$ and simulate different sizes of random graphs. We consider random assignment and our proposed adaptive design algorithm with $b=0.95$. On the left plot, the shaded region is the $95\%$ confidence interval for 100 iterations. All other plots with shaded regions in this section have confidence intervals with the same confidence coefficient. The plot of random assignment shows that the imbalance measurement concentrates around 0.8. This coincides with the theoretical limit $\sqrt{4p(1-p)}=0.8$ suggested by Theorem~\ref{thm:random}. Applying the proposed algorithm, the imbalance measurement decreases to approximately 0.6. The right plot shows the bias of estimation of $\mu_0-\mu_1$ when $n=100$. This experiment is repeated 20000 times.

		\begin{figure}[h]
		\includegraphics[width=2.5in]{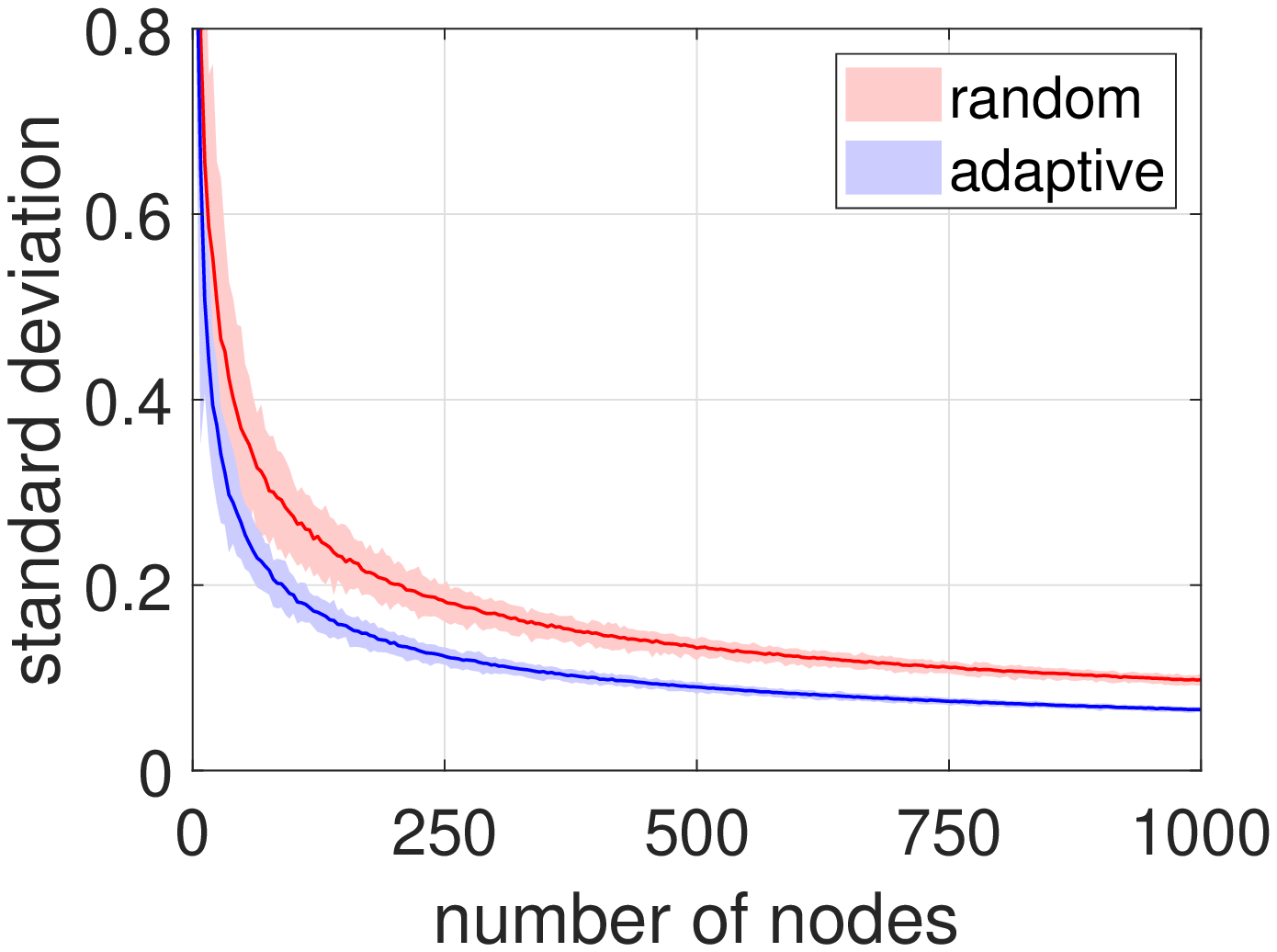} \includegraphics[width=2.5in]{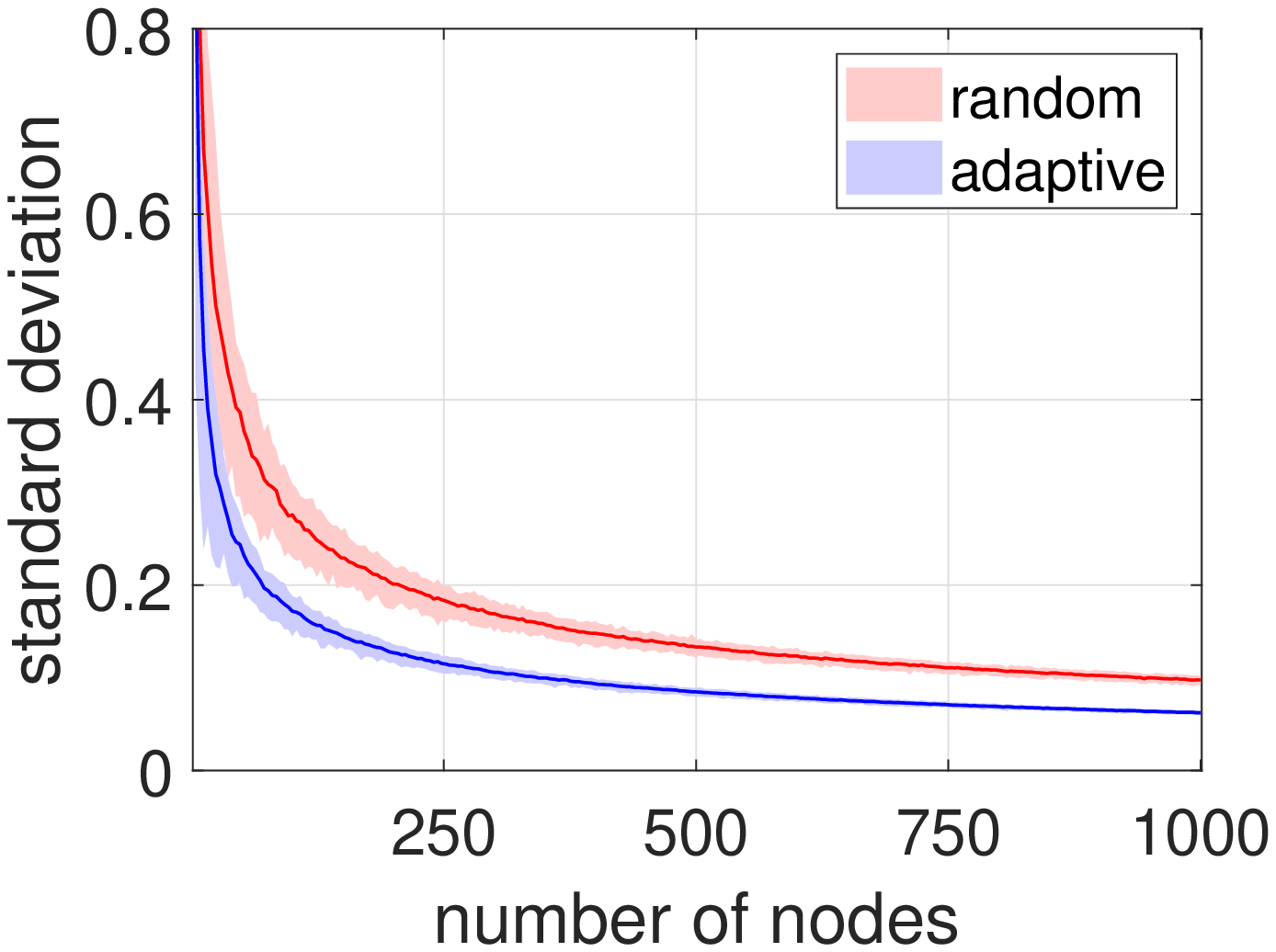}
		\caption{Left: the result when the graph is sparse. Right: the simulation on the Gaussian orthogonal ensemble.}
		\label{fig:two}
	\end{figure}

	The left plot in Figure~\ref{fig:two} considers the sparse Erd\H os-R\' enyi random graph model. For an $n\times n$ network, we consider the density regime $\frac{\log n}{n}$. In particular, we generate random networks $G\big(n, \frac{\log n}{5n}\big)$. The shaded region is the interquartile range over 100 iterations. The imbalance measurement of random design monotonically decreases because its maximum expectation is $\sqrt{4p(1-p)}$, which converges to 0 as the network becomes more sparse. The right plot in Figure~\ref{fig:two} considers the GOE instead. The entrywise variance remains the same. In other words, for the network with $n$ nodes, $p_n=\frac{\log n}{n}$, the corresponding variance of the GOE is $\sigma_n^2 = p_n(1-p_n)$. The results of this simulation show that the imbalance measures have very similar asymptotic behavior for both models.
	
		\begin{figure}[h]
		\mbox{
			\includegraphics[width=0.33\textwidth]{fig/ER_p2.eps}
			\includegraphics[width=0.33\textwidth]{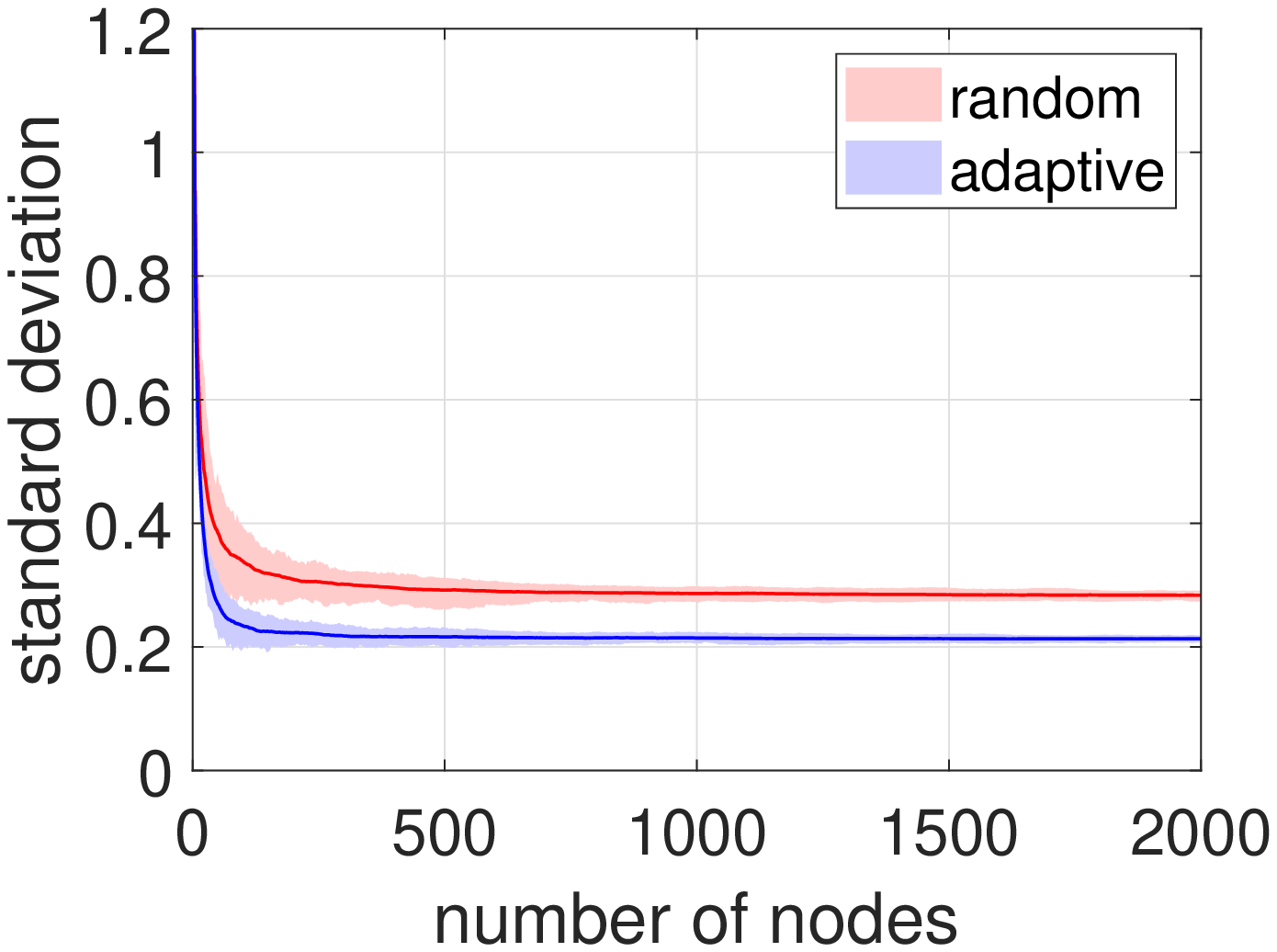}
			\includegraphics[width=0.33\textwidth]{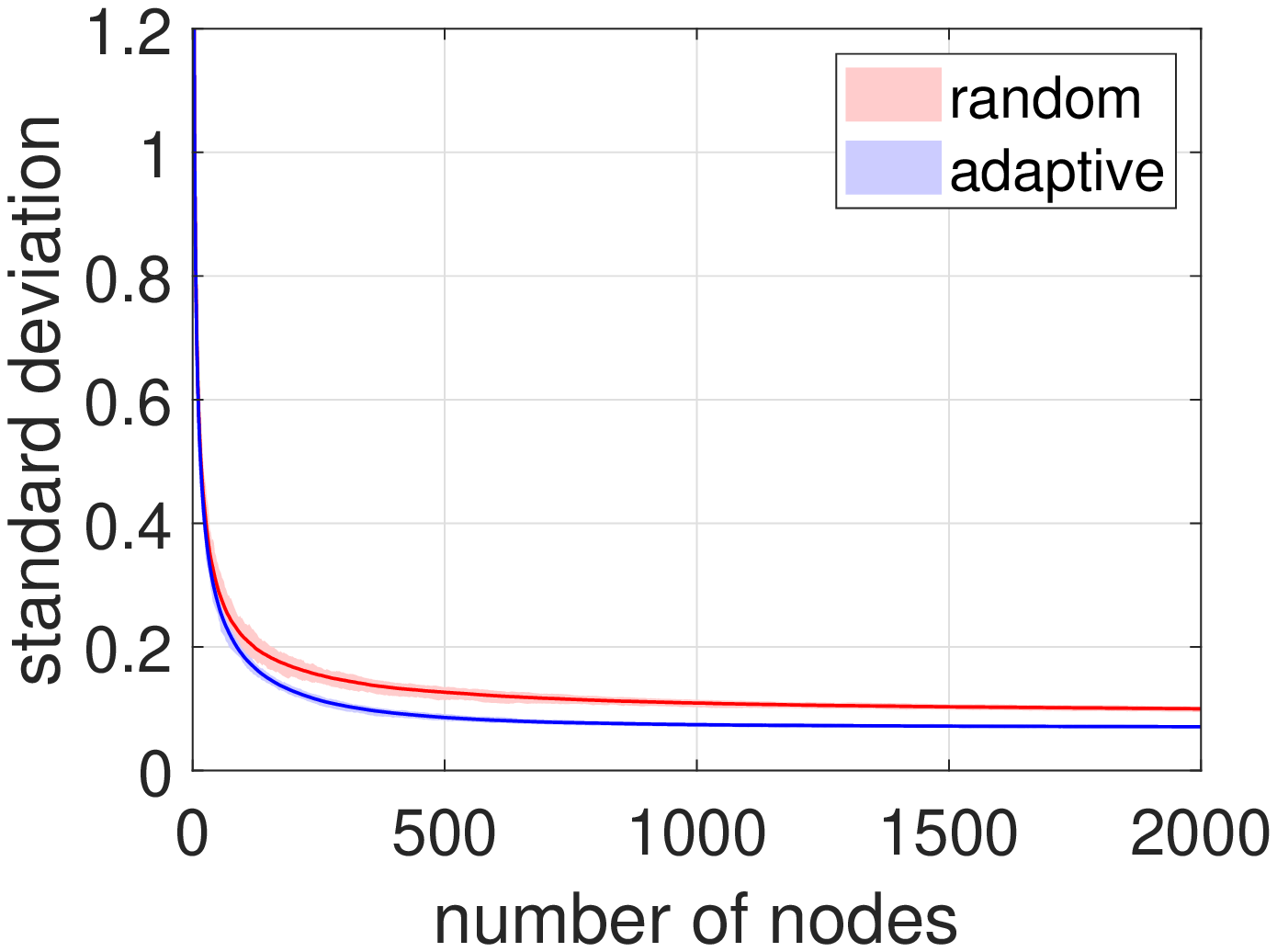}
		}
		\caption{Comparison of the Erd\H os-R\' enyi random graph model with different density. }
		\label{fig:three}
	\end{figure}

	The plots in Figure~\ref{fig:three} compare the performance of our proposed method on the Erd\H os-R\' enyi random graph model. In the plots, we let $p=0.2, 0.02$, and $0.002$ and plot the imbalance measurement on different sizes of random graphs. The result of this experiment is identical to that shown in the left plot in Figure~\ref{fig:one}, but in different densities.
	
	\newpage

	\begin{figure}[h]
		\mbox{
			\includegraphics[width=0.33\textwidth]{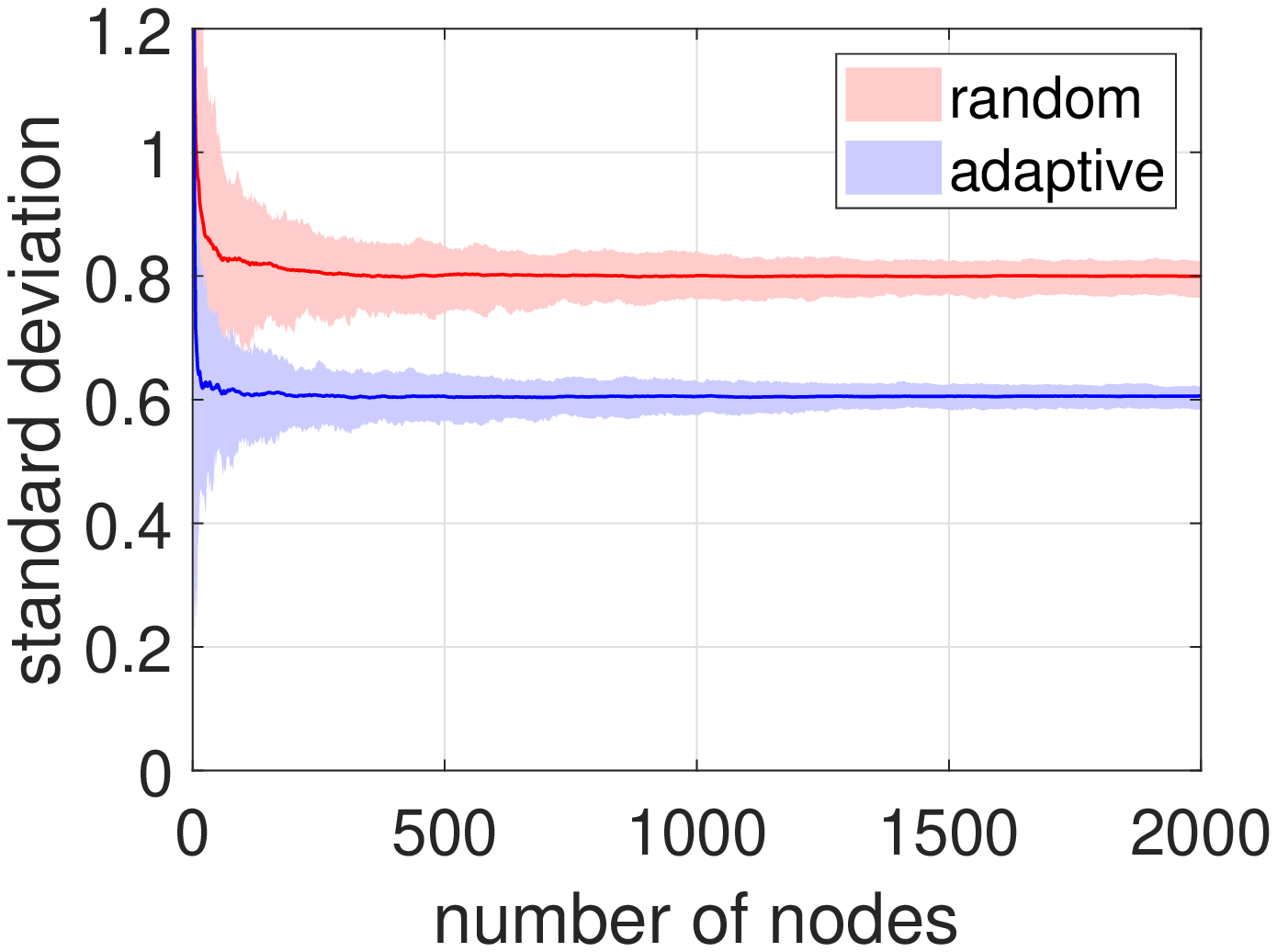}
			\includegraphics[width=0.33\textwidth]{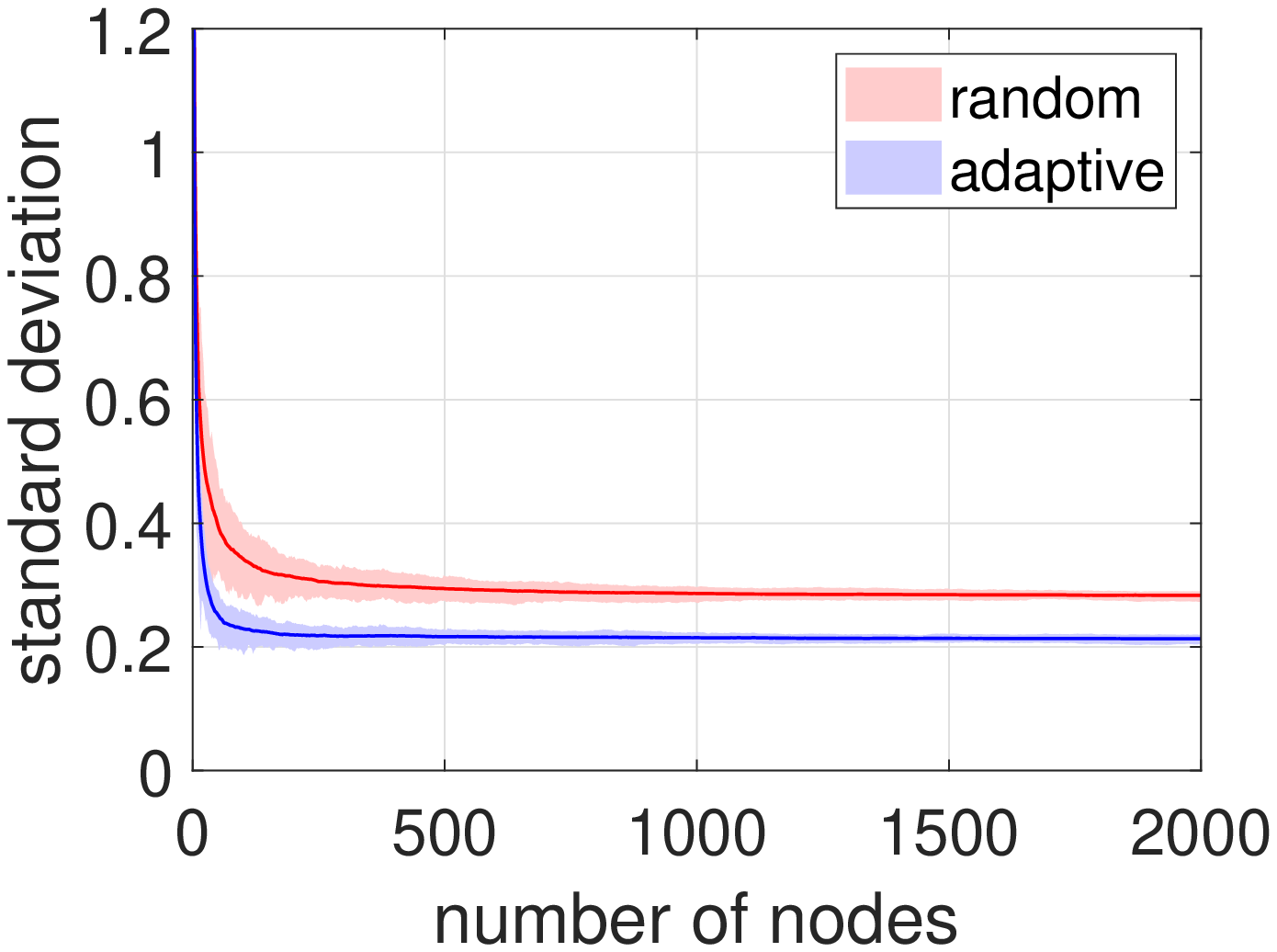}
			\includegraphics[width=0.33\textwidth]{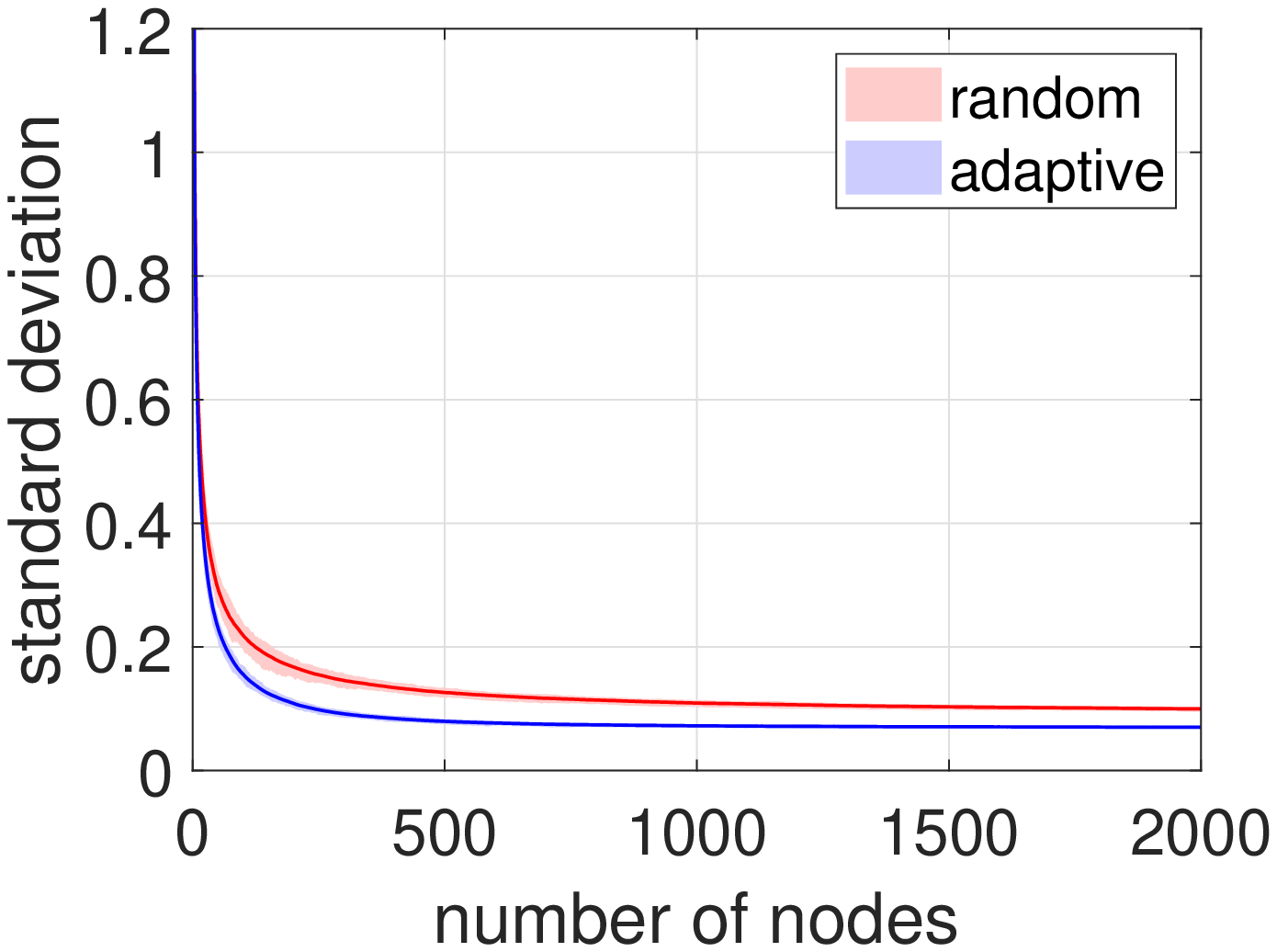}
		}
		\caption{Comparison of Gaussian orthogonal ensemble with different variance. }
		\label{fig:four}
	\end{figure}

	In Figure~\ref{fig:four}, we consider the GOE with different levels of variance. For the left, center, and right plots, we let $\sigma^2=p(1-p)$ for $p=0.2, 0.02, 0.002$ respectively, so that the entrywise variance is the same as that in the Erd\H os-R\' enyi random graph model. As we can see in the plots, the empirical performances are very similar for these two models.
	
	\begin{figure}[h]
		\mbox{
			\includegraphics[width=0.33\textwidth]{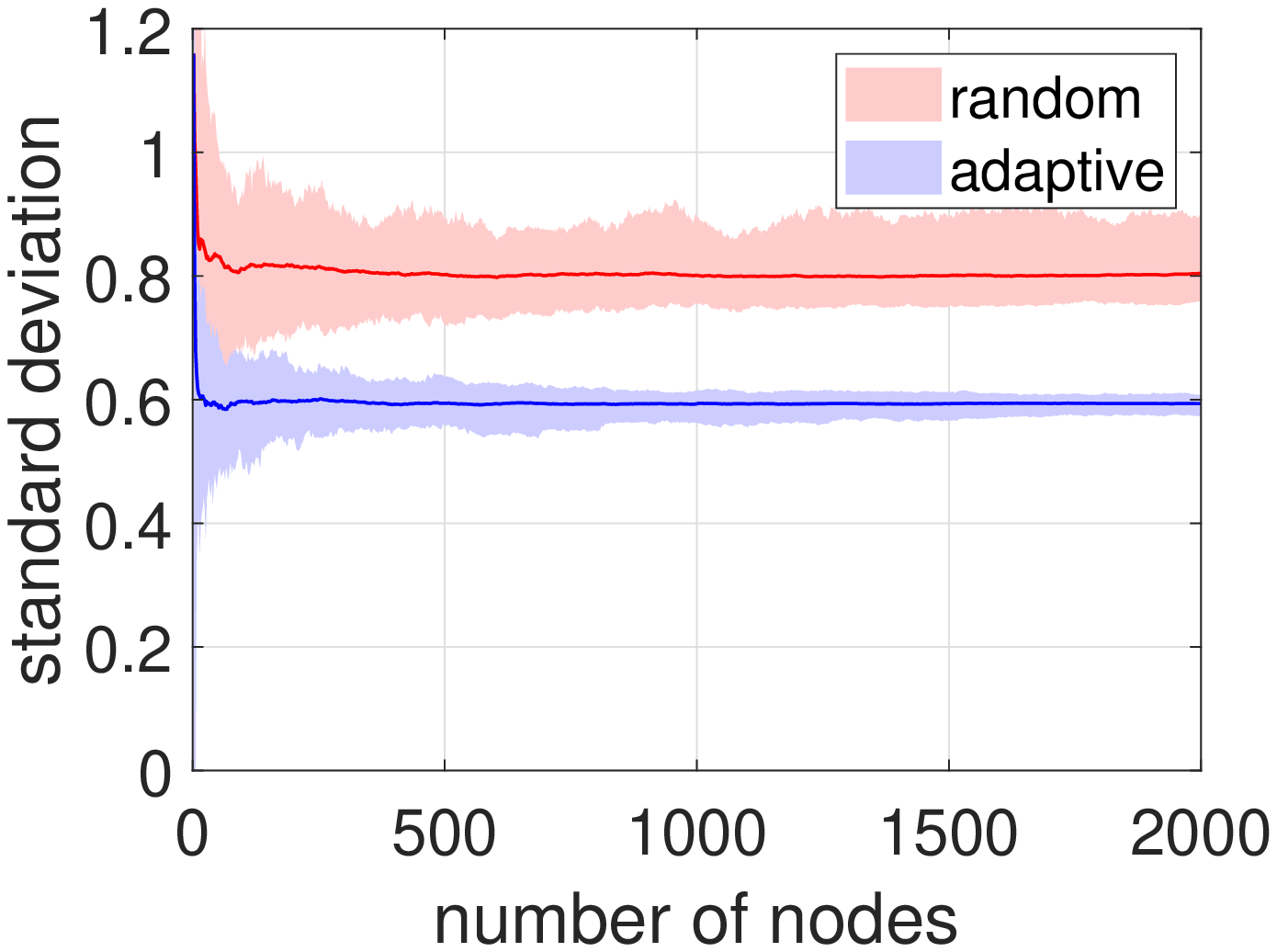}
			\includegraphics[width=0.33\textwidth]{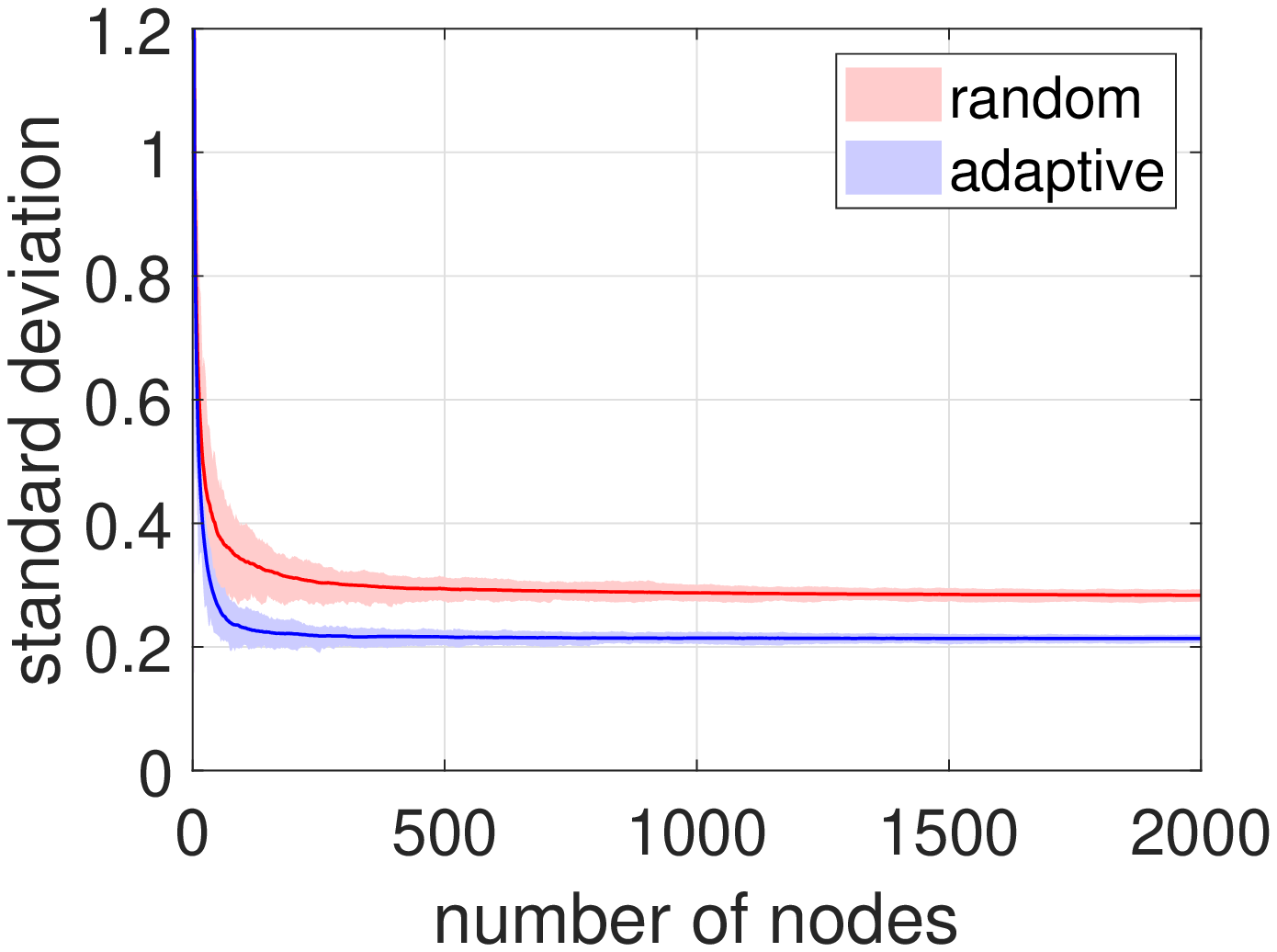}
			\includegraphics[width=0.33\textwidth]{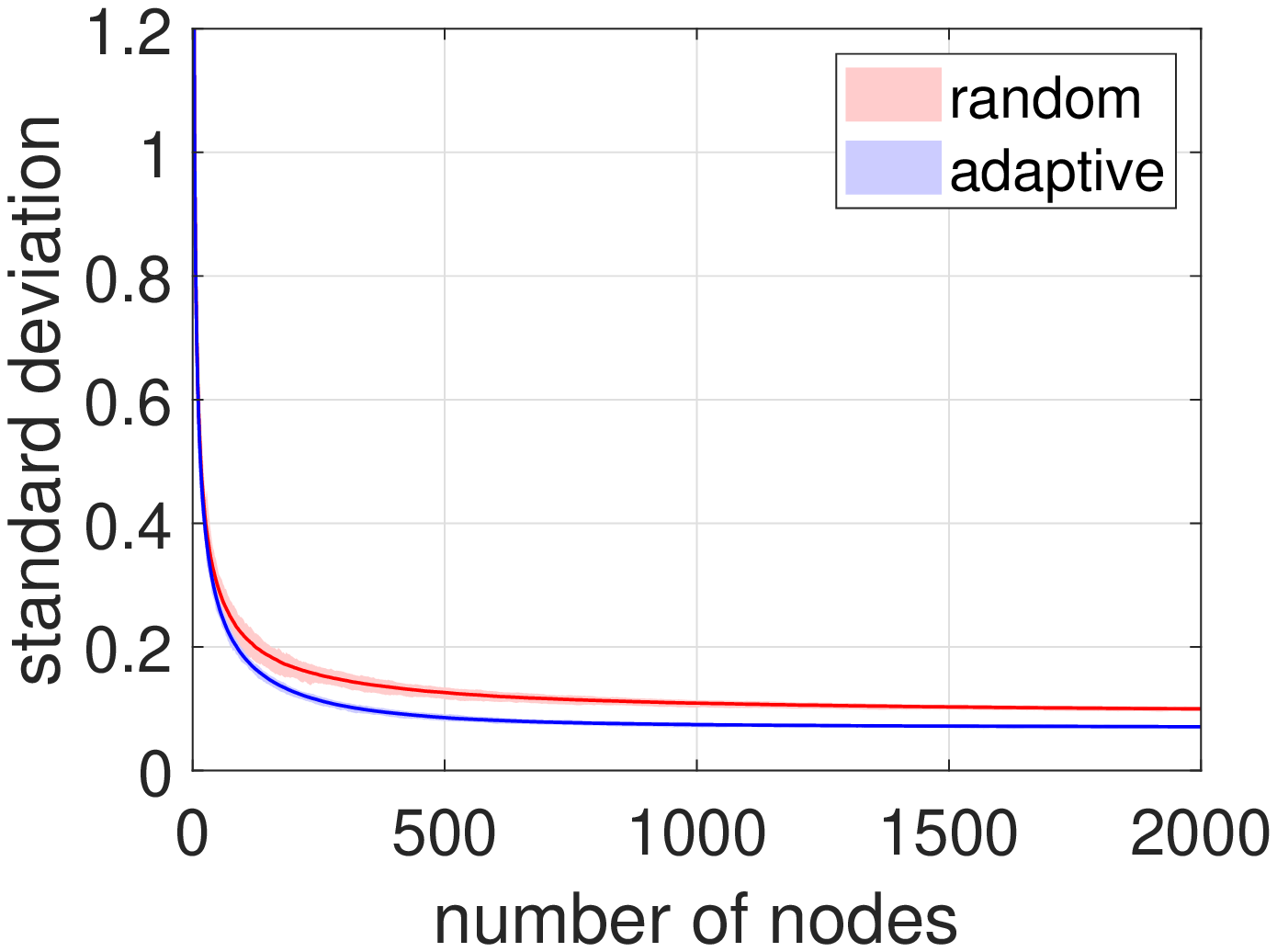}
		}
		\caption{Comparison of stochastic block models with different density. }
		\label{fig:five}
	\end{figure}

	Figure~\ref{fig:five} considers the stochastic block model. The subjects are randomly divided into two groups. In our setting, if two subjects belong to the same group, then the probability of connection between them is $p_1$, and the between-group probability is $p_2$. The plots consider $p_1=0.3, 0.03, 0.003$ and $p_2=0.1, 0.01, 0.001$ from left to right respectively. The overall density is the same as the previous experiment on the Erd\H os-R\' enyi random graph in Figure~\ref{fig:three}, and the empirical performances of the two experiments are very similar. However, on the left plot, we can observe that the confidence intervals are wider than those in Figure~\ref{fig:three} and Figure~\ref{fig:four}.

	\subsection{Experiments on Real Network Data}
	
	We implement our proposed algorithm on 11 real undirected network datasets from SNAP~\cite{snapnets}. For each network, we randomly sample a subnetwork with 10000 nodes, then we apply both adaptive design and random design to the subnetwork. We compare the imbalance measurement in \eqref{eq:imbalance:measurement} in each dataset. Because theoretical analysis shows that the network density plays an important role in imbalance measurement, the densities are recorded in the last column. In our model, we always assume the existence of self loops. When we compute the density of the subnetwork, we consider only the connections between different nodes. For example, the density of the network \verb|com-youtube.ungraph| is approximately 0.00076. In other words, the average degree of the graph is approximately 8.6, including self loops.
	
	\begin{table}[h]
		\caption{~\label{tab:dataset} Comparison of adaptive random design applied to real network data from SNAP}
		\begin{tabular}{ccccc}
			\hline
			Dataset & Adaptive & Random & Reduction & Density\\
			\hline
			\verb|Email-Enron| & 212.4147 & 346.6583  & 39\% &  $11.7178\times 10^{-4}$\\
			\verb|com-youtube.ungraph| &172.6210 & 269.6257 & 36\% & $7.6078\times 10^{-4}$\\
			\verb|HR_edges| & 143.2271 & 212.1933 &  33\% & $3.4652\times 10^{-4}$ \\
			\verb|HU_edges| & 117.8813 & 174.4821 & 32\% &  $1.9725\times 10^{-4}$ \\
			\verb|RO_edges| & 106.7895 & 157.6959 & 32\% & $1.4708\times 10^{-4}$  \\
			\verb|CA-GrQc|  &  99.2774  &  117.0982  & 15\% & $0.3776\times 10^{-4}$\\
			\verb|CA-HepPh|  &  100.3494 & 113.8508  & 12\% & $0.3766\times 10^{-4}$\\
			\verb|CA-AstroPh|  &  100.2098  &  112.5256  & 11\% & $0.2304\times 10^{-4}$\\
			\verb|CA-CondMat|  &  97.8877  &  107.7311  & 9\% & $0.1720\times 10^{-4}$\\
			\verb|CA-HepTh|  &  99.1060  &  104.9667  & 6\% & $  0.1200\times 10^{-4}$\\
			\hline
		\end{tabular}
		\label{table:comparison}
	\end{table}
	
	The edges in these networks might have different meanings. We now explain how our model and algorithm are applied on the network \verb|com-youtube.ungraph|. The nodes of this network represent users on YouTube. Two nodes are connected in the network if they are friends on YouTube. Under the assumption of a network-correlated outcome, friends share common unknown factors that affect observations. To reduce the effect of a factor, we should propose treatment allocation such that friends sharing the corresponding factor are divided into two treatment groups. Suppose we apply our proposed adaptive design on this network with $b=0.85$ (see Remark~\ref{rem:biasing:probability} for the choice of $b$), then the imbalance measurement is reduced by $36\%$.

	We also implement the proposed adaptive design algorithm in the other real networks. The results are presented in Table~\ref{table:comparison}. We observe that the percentage of imbalance measurement reduction depends on the density of the network. For instance, the network of \verb|CA-HepTh| has a low density of $0.1200\times 10^{-4}$, hence, our method can reduce the imbalance measurement by only $6\%$. According to Remark~\ref{rem:adaptive}, there is no evidence that our proposed method can reduce the imbalance measurement significantly if the network is very sparse.
	
		\begin{figure}[h]
		\includegraphics[width=2.5in]{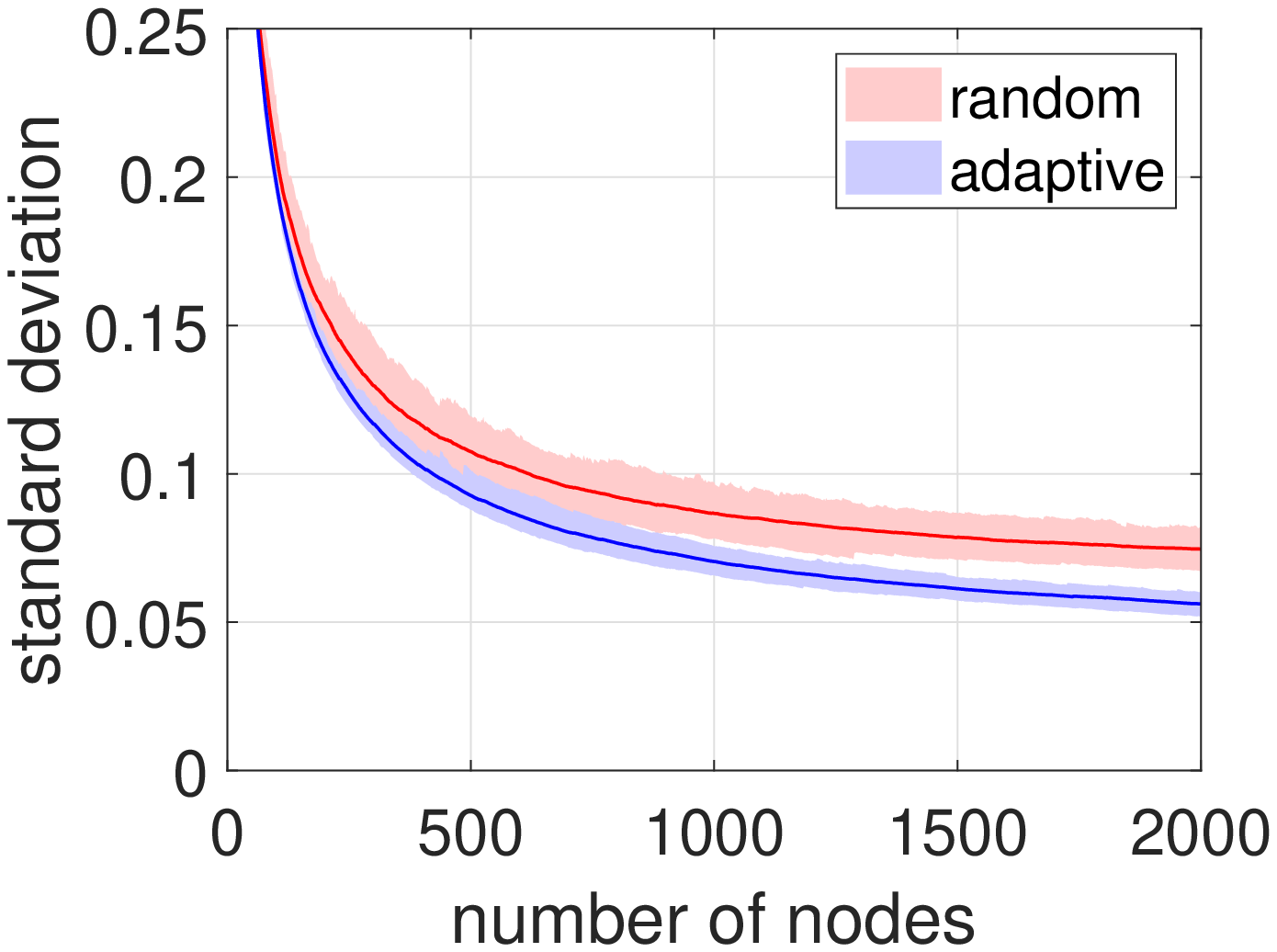}
		\includegraphics[width=2.5in]{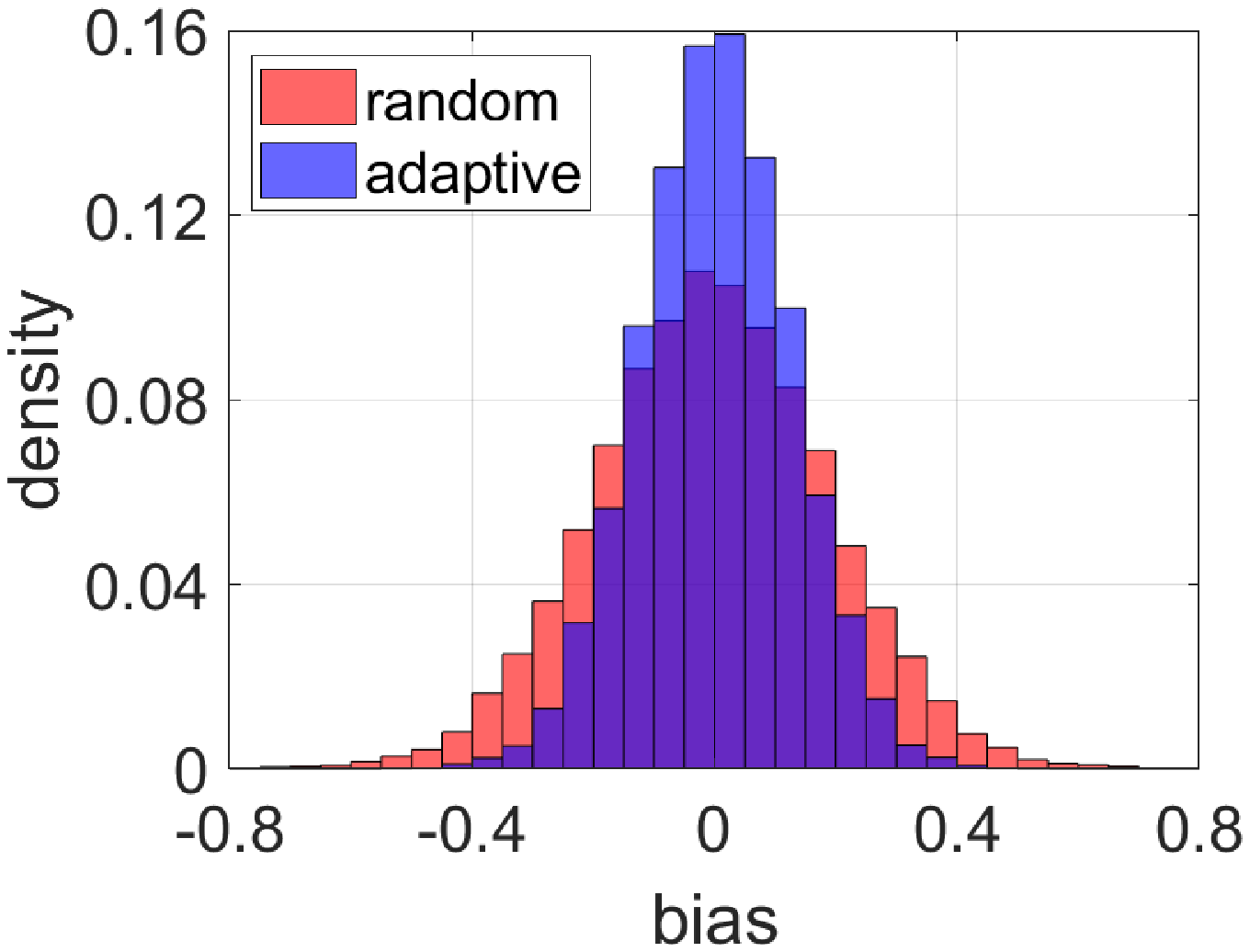}
		\caption{These plots repeat the experiments in Figure~\ref{fig:one} on the real dataset from YouTube.}
		\label{fig:youtube}
	\end{figure}

	In Figure~\ref{fig:youtube}, we implement both random and adaptive designs on different sizes of the network \verb|com-youtube.ungraph|. To repeat the experiments, we keep the random graphs generated in the previous experiment. In this real dataset, to obtain the confidence interval, we repeatedly sample subgraphs from the network and apply the proposed algorithm on each subgraph. The empirical results again show that our proposed method significantly reduces the imbalance measurement and improves the accuracy of estimation of treatment effects.

\newpage

	\section{Conclusion}\label{sec:conclusion}
	
	In this paper, we consider the problem of estimating treatment effects under the assumption that the outcomes are network-correlated. We propose an adaptive randomization procedure to reduce the variance of the estimation. The algorithm assigns different treatments to each pair of subjects sequentially. The biased coin design enforces the assignments, with the result that a smaller imbalance measurement will be chosen with higher probability. For theoretical analysis, we assume the network is generated by the Erd\H os-R\' enyi random graph model. As the number of subjects increases, the states of the Markov process have different dimensions as time progresses. We provide a novel mathematical proof that our adaptive randomization algorithm significantly reduces the imbalance measurement.  Our empirical results also show that this proposed algorithm reduces the variance of the unbiased estimator in both simulated and real data.
	
	The new procedure can still be generalized in several ways. To guarantee a balanced treatment allocation, we consider pairwise sequential design, which determines treatments to two subjects simultaneously. Conventional adaptive design~\cite{hu2012asymptotic} is still applicable in network data. The empirical results can be expected to be similar to the proposed method, but the theoretical analysis will be different, and this is an interesting topic for future work. In the methods described so far, uniform weights are assigned to subjects. If different weights were allowed for different subjects, it might be possible to further reduce the variance of estimation. If a subject has a high degree in the network, its outcome is affected by many other subjects. As a result, the outcome of this node has high variance. If we can reduce the weight of such nodes, the performance of the algorithm will be further improved. If we assume there is interference between subjects, i.e., the outcome of a certain subject might be affected by the treatment of its neighbors, then the analysis in this paper is no longer strictly applicable. However, the tools introduced in this paper could still powerfully reduce the variance under such an interference assumption. As long as we can define the variance after each step sequentially, then we assign the desired assignment to the current subject with a probability greater than 0.5. We believe this procedure at least performs better than random assignment. We leave these as future research topics.
	
	Last but not least, it is possible that the theoretical analysis in Theorem~\ref{thm:adaptive} can be further improved. As mentioned in Section~\ref{thm:GOE}, the result in Theorem~\ref{thm:GOE} allows $\sigma\to 0$. If $p\to 0$ (sparse) in Theorem~\ref{thm:adaptive}, the current analysis does not show that the proposed design still achieves significant improvement. This could be a very interesting problem for further research. In general, it has proven difficult for researchers to obtain theoretical results on the designs of network data, due to the complexity of the problem and the lack of technical tools. In this paper, we introduce the technique of Lyapunov functions.  This technique could provide a feasible way of studying the properties of general designs in network data.
	
	\newpage
	
	\section{Appendix: Proofs}\label{sec:proof}
	
	\subsection{Proof of Theorem~\ref{thm:random}}
	
	In this proof, we briefly denote $T:=T_{\text{random}}$. We have
	\begin{align*}
	\|A (1-2T)\|^2 = \sum_{i=1}^n (A_{i*} (1-2T))^2
	\end{align*}
	and observe that the distributions of $A_{i*} (1-2T_{\text{random}})$ are identical for all $i\in[n]$. Without loss of generality, we consider $i=1$. By the definition of random design, $\ex[1-2T] = 0$. Furthermore, $A$ and $T$ are independent in the random design, so $\ex[A_{1*} (1-2T)] = 0$. Hence we have
	\begin{align*}
	\ex[(A_{1*} (1-2T))^2] = \text{var}[A_{1*} (1-2T)].
	\end{align*}
	We recall that $(T_{2m-1}, T_{2m}) = (0,1)$ or $(1,0)$ with equiprobability. By independence, we have
	\begin{align*}
	\text{var}[A_{1*} (1-2T)] =\sum_{m=1}^{n/2} \text{var}[A_{1,2m-1}(1-2T_{2m-1})+A_{1,2m}(1-2T_{2m})]
	\end{align*}
	For $m\in[[2,n/2]]$, the distributions of $A_{1,2m-1}(1-2T_{2m-1})+A_{1,2m}(1-2T_{2m})$ are identical. When $m=1$, we are in the special case that $A_{11}=1$. Hence, it suffices to consider the cases when $m=1$ and $m=2$. When $m=1$, we have
	\begin{align*}
	\text{var}[A_{11}(1-2T_{1})+A_{12}(1-2T_{2})] = \ex[(1-A_{12})^2] = 1-p.
	\end{align*}
	When $m=2$, we have
	\begin{align*}
	\text{var}[A_{13}(1-2T_{3})+A_{14}(1-2T_{4})] = \ex[(A_{13}-A_{14})^2] = 2p(1-p).
	\end{align*}
	Hence $\ex[(A_{1*} (1-2T))^2] = np(1-p)+(1-2p)(1-p)$, and $\ex[\|A (1-2T)\|^2] = n^2p(1-p)+n(1-2p)(1-p)$. Taking the limit, we have
	\begin{align*}
	\lim_{n\to\infty} \frac {\ex[\|A (1-2T_{\text{random}})\|^2]}{n^2}  = p(1-p)
	\end{align*}
	as desired.
	
	\subsection{Proof of Theorem~\ref{thm:adaptive}}
	
	\begin{proof}
		For $i\in [2m]$ and $j\in m$, let us define
		\begin{align*}
		Y_{ij} = A_{i,2j} - A_{i, 2j-1}
		\end{align*}
		If $i\ne 2j-1$ and $i\ne 2j$, i.e., neither $A_{i,2j}$ nor $A_{i,2j-1}$ is on the diagonal, we have
		\begin{align}\label{eq:distribution:of:Y:entries}
		Y_{ij} =
		\begin{cases}
		-1, &\text {with probability }p(1-p);\\
		0,  &\text {with probability }p^2+(1-p)^2;\\
		1, &\text {with probability }p(1-p)
		\end{cases}
		\end{align}
		We recall that $A^{(2m)}$ is the $2m\times 2m$ submatrix of $A$, $\tilde T_{2m} = {\mathbf 1}_{2m} - 2T_{1:2m}$, and define $Y_m = Y_{1:2m, m+1}\in\mathbb R^{2m}$. In this section, we use the notations
		\begin{align*}
		\tS_m := S_{2m} = A^{(2m)} \tilde T_{2m},\quad \tI_m := I_{2m} = \|A^{(2m)} \tilde T_{2m}\|.
		\end{align*}
		Now, as we define $I_{2m+1}=I_{2m}=\tI_m$ in~\eqref{eq:odd:I}, it suffices to show
		\begin{align}\label{eq:adaptive:design:limit:alt}
		\lim\sup_{n\to\infty} \frac{\ex[\tI_n^4]}{16n^4}\le p^2(1-p)^2 - \frac 1{8}(2b-1)(2-\sqrt{2}(2b-1))^{3/2} p^{5/2}(1-p)^{5/2},
		\end{align}
		which is equivalent to \eqref{eq:adaptive:design:limit}.
		As the entries of $Y_m$ follow the distribution of~\eqref{eq:distribution:of:Y:entries} independently, we have
		\begin{align*}
		\ex[\|Y_{m}\|^2] = 2m\var [Y_{1,m+1}] = 2m\ex [Y_{1,m+1}^2] = 4mp(1-p) .
		\end{align*}
		We have $\ex[\|Y_{m}\|^2] = 4mp(1-p)$. We also define
		\begin{align*}
		Z_{2m+1} = \sum_{i=1}^{2m} A_{2m+1, i} \tilde T_{i}\quad\text{and}\quad
		Z_{2m+2} = \sum_{i=1}^{2m} A_{2m+2, i} \tilde T_{i}.
		\end{align*}
		By definition, we have $Z_{2m+2}-Z_{2m+1} = \tilde T_{2m}^{\top}Y_m$. As $(T_{2i-1}, T_{2i}) = (1,0)$ or $(0,1)$, we can write
		\begin{align*}
		Z_{2m+1} = \sum_{j=1}^{m} (A_{2m+1, 2j} -  A_{2m+1, 2j-1}) \tilde T_{2i} = \sum_{j=1}^{m}Y_{2m+1, j}\tilde T_{2i}.
		\end{align*}
		By symmetry of $Y_{ij}$ $\tilde T_{2i}$, we have that $Z_{2m+1}$ shares the same distribution as $\sum_{j=1}^{m}Y_{2m+1, j}$. It is also clear that $Z_{2m+2}$ shares that same distribution.
		Hence we have $\ex[Z_{2m+1}] =\ex[Z_{2m+2}]= 0$ and
		\begin{align*}
		\ex[Z_{2m+1}^2] =\ex[Z_{2m+2}^2] = \ex\Big[\Big(\sum_{j=1}^{m}Y_{2m+1, j}\Big)^2\Big]
		= \sum_{j=1}^{m}\var[Y_{2m+1, j}] = m\ex[Y_{2m+1, 1}^2] = 2mp(1-p).
		\end{align*}
		In the $m+1$ step of our proposed procedure, we observe two new columns and new rows of the adjacency matrix, which will change the imbalance measurement. The square of the imbalance measurement in the $m+1$ step will be either
		\begin{align*}
		U_m = \|\tS_m+Y_m\|^2 + (Z_{2m+1}-1+A_{2m+1, 2m+2})^2 + (Z_{2m+2}+1-A_{2m+1, 2m+2})^2
		\end{align*}
		or
		\begin{align*}
		V_m = \|\tS_m-Y_m\|^2 + (Z_{2m+1}+1-A_{2m+1, 2m+2})^2 + (Z_{2m+2}-1+A_{2m+1, 2m+2})^2.
		\end{align*}
		Step 3 of the procedure indicates that our new design will pick the smaller one of the above two with probability $b>1/2$, and choose the larger one otherwise. By the symmetry of the distributions of $Y_m$, one can observe that this these two terms have the same expectation, and by direct calculation, we obtain
		\begin{align*}
		\ex[(Z_{2m+1}-1+A_{2m+1, 2m+2})^2] = \ex[(Z_{2m+2}+1-A_{2m+1, 2m+2})^2]=2mp(1-p)+1-p.
		\end{align*}
		\textbf{1. Upper bound of $\ex[\tI_n]$.} we We have the conditional expectation of $\tI_{m+1}^2$,
		\begin{align*}
		\ex[\tI_{m+1}^2|\tS_m]
		&\le \ex[ \|\tS_m+Y_m\|^2 + (Z_{2m+1}-1+A_{2m+1, 2m+2})^2 + (Z_{2m+2}+1-A_{2m+1, 2m+2})^2|\tS_m]\\
		& = \ex [ \|\tS_m\|^2 + 2\tS_m^{\top} Y_m + \|Y_m\|^2 + (Z_{2m+2}+1-A_{2m+1, 2m+2})^2  \\
		&\quad + (Z_{2m+2}+1-A_{2m+1, 2m+2})^2|\tS_m ]\\
		& = \|\tS_m\|^2 + 4mp(1-p)+2mp(1-p)+1-p+2mp(1-p)+1-p\\
		& = \|\tS_m\|^2 + 8mp(1-p) + 2(1-p).
		\end{align*}
		Therefore,
		$\ex[\tI_{m+1}^2 - \tI_m^2 | \tI_m^2 ] = 8mp(1-p) + 2(1-p)$. Hence
		\begin{align*}
		\ex[\tI_{m+1}^2 - \tI_m^2] = \ex[\ex[\tI_{m+1}^2-\tI_m^2|\tI_m^2] ] \le 8mp(1-p)+2(1-p).
		\end{align*}
		In the first stage, the imbalance measurement $\ex[\tI_1^2] = (1-A_{12})^2+(A_{21}-1)^2 = 2(1-p)$. Thus
		\begin{align*}
		\ex[\tI_n^2] \le \sum_{m=0}^{n-1} 8mp(1-p)+2(1-p) = 4n(n-1)p(1-p) + 2n(1-p)
		\le \Big( 2n\sqrt{p(1-p)} + \sqrt{\frac{1-p}{4p}} \Big)^2.
		\end{align*}
		By Jensen's inequality,
		\begin{align*}
		\ex[\tI_n]\le \sqrt{\ex[\tI_n^2]}\le 2n\sqrt{p(1-p)} + \sqrt{\frac{1-p}{4p}}.
		\end{align*}
		\textbf{2. Lower bound of $\ex[\tI_n^2]$.} With the upper bound of the first moment, we can derive the lower bound of the second moment.
		\begin{align*}
		\ex[\tI_{m+1}^2|\tS_m] &= \ex[B \min(U_m, V_m) + (1-B)\max(U_m, V_m)|\tS_m]\\
		& = \ex[U_m|\tS_m] - (2b-1) \ex[|U_m - V_m| |\tS_m]\\
		& = \|\tS_m\|^2 + 8mp(1-p) + 2(1-p) \\
		&\quad - (2b-1)\ex[| 2\tS_m^\top Y_m + 2(1-A_{2m+1, 2m+2})(Z_{2m+1}-Z_{2m+2}) | |\tS_m]
		\end{align*}
		Now we aim to find an upper bound of $\ex[| 2\tS_m^\top Y_m + 2(1-A_{2m+1, 2m+2})(Z_{2m+1}-Z_{2m+2}) | |\tS_m]$. By Jensen's inequality,
		\begin{align*}
		&(\ex[| 2\tS_m^\top Y_m + 2(1-A_{2m+1, 2m+2})(Z_{2m+1}-Z_{2m+2}) | |\tS_m])^2\\
		&\le 4\ex[ (\tS_m^\top Y_m + (1-A_{2m+1, 2m+2})(Z_{2m+1}-Z_{2m+2}))^2 |\tS_m ]\\
		& = 4\ex[( \tS_m^\top Y_m )^2|\tS_m] + 8\ex[(1-A_{2m+1, 2m+2})(Z_{2m+1}-Z_{2m+2})\tS_m^\top Y_m|\tS_m]\\
		& \quad +4 \ex[(1-A_{2m+1, 2m+2})(Z_{2m+1}-Z_{2m+2})^2 |\tS_m ]
		\end{align*}
		Now we will find the condition expectation of these three terms. As the entries of $Y_m$ are i.i.d. with distribution~\eqref{eq:distribution:of:Y:entries} we have
		\begin{align*}
		\ex[( \tS_m^\top Y_m )^2|\tS_m]  = \|\tS_m\|^2 \ex[Y_{21}^2] = 2\|\tS_m\|^2p(1-p).
		\end{align*}
		For the second term, by the definition of $Y_m, Z_{2m+1}$ and $Z_{2m+2}$, we have $Z_{2m+2}-Z_{2m+1}=Y_m^{\top}\tilde T_{2m}$. Hence
		\begin{align*}
		\ex[(1-A_{2m+1, 2m+2})(Z_{2m+1}-Z_{2m+2})\tS_m^\top Y_m|\tS_m]
		= (1-p)\ex[-Y_{m}^\top \tilde T_{2m} \tS_m^\top Y_m|\tS_m].
		\end{align*}
		As the distributions of the $(2i-1)$-th and $2i$-th rows are identical, we have $\pr(\tilde T_i=-1|\tS_m)=\pr(\tilde T_i=1|\tS_m)=0.5$. Hence for all $i$ and $m$, $T_i$ and $\tS_m$ are independent. $T_i$ and $\tS_m$ depend only on the submatrix $A^{(2m)}$, so they are independent of $Y_m$. Thus,
		\begin{align*}
		\ex[Y_{m}^\top \tilde T_{2m} \tS_m^\top Y_m|\tS_m]=0,
		\end{align*}
		which implies the second term vanishes. Using $Z_{2m+2}-Z_{2m+1}=\tilde T_{2m}^{\top} Y_m$ again, we have
		\begin{align*}
		\ex[(1-A_{2m+1, 2m+2})(Z_{2m+1}-Z_{2m+2})^2 |\tS_m ] &= (1-p)\ex[(\tilde T_{2m}^{\top} Y_m)^2] \\
		&= (1-p)\|\tilde T_{2m}\|^2 \ex[Y_{21}^2]\\
		&=4mp(1-p)^2.
		\end{align*}
		Therefore, using the fact that $\sqrt{x+y}\le \sqrt x+ \sqrt y$ for $x,y\ge 0$, we have
		\begin{align*}
		\ex[| 2\tS_m^\top Y_m + 2(1-A_{2m+1, 2m+2})(Z_{2m+1}-Z_{2m+2}) | |\tS_m]
		&\le 2\sqrt{2\|\tS_m\|^2p(1-p)+2mp(1-p)^2}\\
		&\le 2\|\tS_m\|\sqrt{2p(1-p)} + 4\sqrt{mp}(1-p).
		\end{align*}
		As $\|\tS_m\|=\tI_m$, we have
		\begin{align*}
		\mathbb E[\tI_{m+1}^2-\tI_m^2]
		&= \ex[\mathbb E[\tI_{m+1}^2 - \tI_m^2 | \tS_m]] \\
		&\ge 8mp(1-p) +2(1-p) - 2(2b-1)(\ex[\tI_m]\sqrt{2p(1-p)} + 2\sqrt{2mp}(1-p))\\
		&= 8mp(1-p) +2(1-p) -4\sqrt{2} (2b-1)mp(1-p)-\sqrt2(2b-1)p(1-p) \\
		&\quad - 4\sqrt{mp}(2b-1)(1-p)\\
		& = (8-4\sqrt 2(2b-1))mp(1-p) - (4\sqrt{mp}(2b-1) + 2 - \sqrt 2(2b-1)p)(1-p).
		\end{align*}
		Recalling that $\ex[\tI_1^2]=2(1-p)$, and using $\sum_{m=1}^{n-1}\sqrt m\ge \frac 23(n-1)^{3/2}$, we have
		\begin{align*}
		\mathbb E[\tI_{n}^2]
		&=2(1-p)+\sum_{m=1}^{n-1}   (8-4\sqrt 2(2b-1))mp(1-p) - (4\sqrt{mp}(2b-1) + 2 - \sqrt 2(2b-1)p)(1-p)\\
		&\ge (4-2\sqrt 2(2b-1))n(n-1)p(1-p)- \frac{8}{3}(2b-1)\sqrt p(1-p)(n-1)^{3/2} \\
		&\quad- (2-\sqrt 2 (2b-1)p)(n-1)(1-p).
		\end{align*}
		\textbf{3. Lower bound of $\ex[\tI_n^3]$.} Combining with Jensen's inequality, we have
		\begin{align*}
		\ex[\tI_n^3]\ge
		\ex[\tI_n^2]^{3/2}
		\ge (4-2\sqrt 2(2b-1))n^2p(1-p) + O(n^{3/2})\sqrt{p}(1-p))^{3/2}.
		\end{align*}
		Since $(x+y)^{3/2}\ge x^{3/2}+y^{3/2}$ for $x,y\ge 0$, we have
		\begin{align*}
		\ex[\tI_n^3]\ge(4-2\sqrt 2(2b-1))^{3/2}n^3p^{3/2}(1-p)^{3/2} + O(n^{9/4}p^{3/4}(1-p)^{3/2}).
		\end{align*}
		\textbf{4. Upper bound of $\ex[\tI_n^4]$.} Now we are ready to establish the upper bound of the fourth moment of $\tI_n$. We have
		\begin{align}\label{eq:fourth:moment}
		\begin{split}
		\mathbb E[\tI_{m+1}^4 ]
		&= \ex[ B\max(U_m^2, V_m^2) + (1-B) \min(U_m^2, V_m^2)  ]\\
		&= \ex[U_m^2] - (2b-1)\ex[|U_m^2 - V_m^2| ].
		\end{split}
		\end{align}
		where the first term
		\begin{align*}
		\ex[U_m^2] =\ex[(\|\tS_m+Y_m\|^2 + (Z_{2m+1}-1+A_{2m+1, 2m+2})^2 + (Z_{2m+2}+1-A_{2m+1, 2m+2})^2)^2].
		\end{align*}
		As $\ex[\tS_m]=0$ and $\tS_m$ is independent of $Y_m$, $Z_{2m+1}$ and $Z_{2m+2}$, all of the cross terms containing $S_{m}^\top Y_m$ have expectation 0. Using $\ex[\tS_m^\top Y_m]=0$, it is easy to check
		\begin{align*}
		\ex[(\tS_m^\top Y_m)^2] = \var(\tS_m^\top Y_m) = 2p(1-p)\ex[\|\tS_m\|^2].
		\end{align*}
		By independence again, we have
		\begin{align*}
		\ex[2\|\tS_m\|^2\|Y_m\|^2] = 2\ex[\|\tS_m\|^2] \ex[\|Y_m\|^2] = 8mp(1-p)\ex[\|\tS_m\|^2],
		\end{align*}
		and recalling that $\ex[(Z_{2m+1}-1+A_{2m+1, 2m+2})^2] = \ex[(Z_{2m+2}+1-A_{2m+1, 2m+2})^2]=2mp(1-p)+1-p$, we have
		\begin{align*}
		\ex[2\|\tS_m\|^2(Z_{2m+1}-1+A_{2m+1, 2m+2})^2] &= \ex[\|\tS_m\|^2(Z_{2m+2}+A_{2m+2, 2m+1}-1)^2] \\
		&=2 (2mp+1)(1-p)\ex[\|\tS_m\|^2].
		\end{align*}
		The other terms do not contain $\tS_m$. We first compute the fourth moments:
		\begin{align*}
		\ex[\|Y_m\|^4] &=\ex\Big[\Big(\sum_{i=1}^{2m}Y_{i,m+1}\Big)^4\Big]
		= \sum_{i=1}^{2m} \ex [Y_{i,m+1}^4] + \sum_{1\le i<j\le m} \ex [Y_{i,m+1}^2Y_{j,m+1}^2] \\
		&= 4mp(1-p) + 2m(2m-1)p^2(1-p)^2,
		\end{align*}
		and we recall that $Z_{2m+1}$ has the same distribution as $\sum_{i=1}^m Y_{2m+1, i}$
		\begin{align*}
		\ex[Z_{2m+1}^4]&=\sum_{i=1}^m \ex[Y_{2m+1,i}^4] + {4\choose 2}\sum_{1\le i<j\le m} \ex[Y_{2m+1,i}^2Y_{2m+1,j}^2]\\
		&=2mp(1-p)+12m(m-1)p^2(1-p)^2.
		\end{align*}
		We have $\ex[Z_{2m+2}^4] = 2mp(1-p)+12m(m-1)p^2(1-p)^2$ in the same manner.
		By symmetry of $Z_{2m+1}$, we have $\ex[Z_{2m+1}] = \ex[Z_{2m+1}^3] = 0$, so
		\begin{align*}
		&\ex[(Z_{2m+1}-1+A_{2m+1, 2m+2})^4]\\
		& = \ex[Z_{2m+1}^4 + 6Z_{2m+1}^2(A_{2m+1, 2m+2}-1)^2+(A_{2m+1, 2m+2}-1)^4]\\
		& =  2mp(1-p)+12m(m-1)p^2(1-p)^2 + 6(2mp(1-p))(1-p) + 1-p\\
		& \le (14mp+1)(1-p) + 12m^2p^2(1-p)^2.
		\end{align*}
		Applying these bounds, we have
		\begin{align*}
		&\ex[U_m^2] =\ex[(\|\tS_m+Y_m\|^2 + (Z_{2m+1}-1+A_{2m+1, 2m+2})^2 + (Z_{2m+2}+1-A_{2m+1, 2m+2})^2)^2]\\
		& = \ex[\|\tS_m\|^4 + (2\tS_m^\top Y_m)^2 + \|Y_m\|^4 + (Z_{2m+1}-1+A_{2m+1, 2m+2})^4+ (Z_{2m+2}+1-A_{2m+1, 2m+2})^4\\
		& \quad + 2\|\tS_m\|^2\|Y_m\|^2 + 2\|\tS_m\|^2(Z_{2m+1}-1+A_{2m+1, 2m+2})^2 + 2\|\tS_m\|^2(Z_{2m+2}+1-A_{2m+1, 2m+2})^2\\
		& \quad + 2\|Y_m\|^2 (Z_{2m+1}-1+A_{2m+1, 2m+2})^2 + 2\|Y_m\|^2(Z_{2m+2}+1-A_{2m+1, 2m+2})^2\\
		&\quad + 2(Z_{2m+1}-1+A_{2m+1, 2m+2})^2(Z_{2m+2}+1-A_{2m+1, 2m+2})^2]\\
		& = \ex[\|\tS_m\|^4] + 4p(1-p)\ex[\|\tS_m\|^2] +4mp(1-p) + 2m(2m-1)p^2(1-p)^2 \\
		&\quad + 2(2mp(1-p)+12m(m-1)p^2(1-p)^2) + 2\ex[\|\tS_m\|^2](4mp(1-p)) \\
		&\quad + 4\ex[\|\tS_m\|^2](2mp(1-p)+1-p) + 4(4mp(1-p))(2mp(1-p)+1-p)\\
		&\quad +4(2mp(1-p)+1-p)^2\\
		& = \ex[\|\tS_m\|^4] + 16mp(1-p)\ex[\|\tS_m\|^2] + O((m^2p(1-p)+\ex[\|\tS_m\|^2])\\
		& = \ex[\|\tS_m\|^4] + 16mp(1-p)\ex[\|\tS_m\|^2] + O(m^2p(1-p)).
		\end{align*}
		We denote as ``high-order terms" those whose expected values have an order of at most $O((m^2p(1-p)+\ex[\|\tS_m\|^2])p(1-p))$. Now let us consider the terms in $U_m^2 - V_m^2$ with absolute value. The only term that does not belong to the high-order terms is $4(2b-1)\ex[\|\tS_m\|^2|\tS_m^\top Y_m|]$. In other words,
		\begin{align*}
		|\ex[|U_m^2 - V_m^2|] - 4(2b-1)\ex[\|\tS_m\|^2|\tS_m^\top Y_m|]| = O(m^2p(1-p)).
		\end{align*}
		By Lemma~\ref{lem:kahane}, we have
		\begin{align*}
		\ex[\|\tS_m\|^2|\tS_m^\top Y_m|]
		= \ex[\ex[\|\tS_m\|^2|\tS_m^\top Y_m|] |\tS_m]
		\ge \ex[2p(1-p)\|\tS_m\|^3]
		= 2p(1-p) \ex[\tI_m^3].
		\end{align*}
		Applying this inequality to~\eqref{eq:fourth:moment}, we have
		\begin{align*}
		&\mathbb E[\tI_{m+1}^4]
		\le \ex[\tI_m^4] + 16mp(1-p)\ex[\tI_m^2] - 8(2b-1)p(1-p)\ex[\tI_m^3] + O((m^2p(1-p)+\ex[\tI_m^2])p(1-p))\\
		&\le \ex[\tI_m^4] + 16mp(1-p)(4m^2p(1-p)+2m(1-p)) +O(m^2p(1-p))\\
		&\quad - 8(2b-1)p(1-p)[(4-2\sqrt 2(2b-1))^{3/2}m^3p^{3/2}(1-p)^{3/2} + O(m^{9/4}p^{3/4}(1-p)^{3/2})]\\
		&=\ex[\tI_m^4] + 64m^3p^2(1-p)^2 - 8(2b-1)(4-2\sqrt{2}(2b-1))^{3/2} m^3 p^{5/2}(1-p)^{5/2} + O(m^{9/4}p(1-p)).
		\end{align*}
		In the first stage, the imbalance measurement $\ex[\tI_1^4] = (1-A_{12})^4+(A_{21}-1)^4 = 2(1-p)$. Thus
		\begin{align*}
		\ex[\tI_n^4]
		& = 2(1-p) + \sum_{m=1}^{n-1} \ex[\tI_{m+1}^4 - \tI_m^4]\\
		& = 2(1-p) + \sum_{m=1}^{n-1}  64m^3p^2(1-p)^2 - 8(2b-1)(4-2\sqrt{2}(2b-1))^{3/2} m^3p^{5/2}(1-p)^{5/2}\\
		& \quad  + O(m^{9/4}p(1-p))\\
		& = (16 p^2(1-p)^2 - 2(2b-1)(2-\sqrt{2}(2b-1))^{3/2} p^{5/2}(1-p)^{5/2} )n^4 + O(n^{13/4}p(1-p)).
		\end{align*}
		Therefore,
		\begin{align*}
		\lim\sup_{n\to\infty} \frac{\ex[\tI_n^4]}{n^4}\le 16 p^2(1-p)^2 - 2(2b-1)(2-\sqrt{2}(2b-1))^{3/2} p^{5/2}(1-p)^{5/2} ,
		\end{align*}
		which proves \eqref{eq:adaptive:design:limit:alt}.
	\end{proof}
	
	\subsection{Proof of Theorem~\ref{thm:GOE}}
	The outline of the proof is very similar to that of Theorem~\ref{thm:adaptive}. We adopt the definitions of $\tI_m$, $\tS_m$, $Y_{ij}$, $Y_m$, $Z_{2m+1}$ and $Z_{2m+2}$, $U_m$ and $V_m$. As we replace the Erd\H os-R\' enyi random graph model by the GOE, we have $A_{ij}\sim \mathcal (0, \sigma^2)$ for $1\le i\le j\le n$ and $A_{ij} = A_{ji}$ if $1\le i<j\le n$. Then $Y_{ij}\sim \mathcal N(0, 2\sigma^2)$, $Z_{2m+1}, Z_{2m+2}\sim \mathcal N(0, 2m\sigma^2)$. Hence
	\begin{align*}
	\|Y_{ij}\|^2 \sim 2\sigma^2 \chi^2_{2m}
	\quad\text{and} \quad
	Z_{2m+1}^2, Z_{2m+2}^2\sim 2m\sigma^2\chi_1^2.
	\end{align*}
	We need to replace the moments of these variables in the proof of Theorem~\ref{thm:adaptive}. \\
	
	\textbf{1. Upper bound of $\ex[\tI_n]$. } The conditional expectation of $\tI_{m+1}^2$ is bounded by
	\begin{align*}
	\ex[\tI_{m+1}^2|\tS_m]
	&\le \ex[ \|\tS_m+Y_m\|^2 + (Z_{2m+1}-1+A_{2m+1, 2m+2})^2 + (Z_{2m+2}+1-A_{2m+1, 2m+2})^2|\tS_m]\\
	& = \ex [ \|\tS_m\|^2 + 2\tS_m^{\top} Y_m + \|Y_m\|^2 + (Z_{2m+2}+1-A_{2m+1, 2m+2})^2  \\
	&\quad + (Z_{2m+2}+1-A_{2m+1, 2m+2})^2|\tS_m ]\\
	& = \|\tS_m\|^2 + 4m\sigma^2 + (2m+1)\sigma^2+(2m+1)\sigma^2 + 2\\
	& = \|\tS_m\|^2 + (8m+2)\sigma^2 + 2.
	\end{align*}
	Therefore,
	$\ex[\tI_{m+1}^2 - \tI_m^2 | \tI_m^2 ] \le (8m+2)\sigma^2 + 2$. Hence
	\begin{align*}
	\ex[\tI_{m+1}^2 - \tI_m^2] = \ex[\ex[\tI_{m+1}^2-\tI_m^2|\tI_m^2] ] \le (8m+2)\sigma^2 + 2.
	\end{align*}
	In the first stage, $\ex[\tI_1^2]\le 2+2\sigma^2$. Thus,
	\begin{align*}
	\ex[\tI_n^2]\le 2+2\sigma^2+\sum_{m=1}^{n-1} (8m+2)\sigma^2 + 2\le 4n^2\sigma^2+2n\le \Big( 2n\sigma+\frac{1}{2\sigma} \Big)^2.
	\end{align*}
	By Jensen's inequality,
	\begin{align*}
	\ex[\tI_n]\le \sqrt{\ex[\tI_n^2]}\le 2n\sigma+\frac 1{2\sigma}.
	\end{align*}
	\textbf{Lower bound of $\ex[\tI_n^2]$.} As we did in the proof of Theorem~\ref{thm:adaptive}, to find the lower bound of $\ex[\tI_{m+1}^2-\tI_m^2]$, we need to find the upper bound for
	\begin{align*}
	&(\ex[| 2\tS_m^\top Y_m + 2(1-A_{2m+1, 2m+2})(Z_{2m+1}-Z_{2m+2}) | |\tS_m])^2\\
	&
	\le\ex[4( \tS_m^\top Y_m )^2+8(1-A_{2m+1, 2m+2})(Z_{2m+1}-Z_{2m+2})\tS_m^\top Y_m\\
	&\quad +4(1-A_{2m+1, 2m+2})(Z_{2m+1}-Z_{2m+2})^2 |\tS_m ].
	\end{align*}
	The second term has expectation 0 for the same reason as in the proof of Theorem~\ref{thm:adaptive}. Since $Y_m\sim \mathcal{N}(0, 2\sigma^2\mathbf \tI_m)$, we have $\ex[(\tS_m^\top Y_m )^2|\tS_m]=2\sigma^2\|\tS_m\|^2$.  We recall that $Z_{2m+1}, Z_{2m+2}\sim \mathcal N(0, 2m\sigma^2)$, so
	\begin{align*}
	\ex[(1-A_{2m+1, 2m+2})(Z_{2m+1}-Z_{2m+2})^2 |\tS_m ] = 4m\sigma^2.
	\end{align*}
	Applying $Z_{2m+1}-Z_{2m+2} = \ones_{2m}^\top Y_m$, we have
	\begin{align*}
	&\ex[| 2\tS_m^\top Y_m + 2(1-A_{2m+1, 2m+2})(Z_{2m+1}-Z_{2m+2}) | |\tS_m]\\
	&= 2\ex[(\tS_m+(1-A_{2m+1, 2m+2})\ones_{2m})^\top Y_m|\tS_m]\\
	& = 2\sqrt{2/ \pi}\ex[\|\tS_m+(1-A_{2m+1, 2m+2})\ones_{2m}\||\tS_m]\\
	& \ge 2\sqrt{2/ \pi}(\|\tS_m\|+\sqrt{2m(1+\sigma^2)})\sigma.
	\end{align*}
	where in the last equality, we use if $X\sim\mathcal N(0,\sigma^2)$, then $\ex[|X|]=\sigma\sqrt{2/\pi}$.
	Using the same definition of $U_m$ and $V_m$ from the proof of Theorem~\ref{thm:adaptive}, we have
	\begin{align*}
	\ex[\tI_{m+1}^2|\tS_m] &= \ex[B \min(U_m, V_m) + (1-B)\max(U_m, V_m)|\tS_m]\\
	& = \ex[U_m|\tS_m] - (2b-1) \ex[|U_m - V_m| |\tS_m]\\
	& = \|\tS_m\|^2 + 8m\sigma^2+2 \\
	&\quad - (2b-1)\ex[| 2\tS_m^\top Y_m + 2(1-A_{2m+1, 2m+2})(Z_{2m+1}-Z_{2m+2}) | |\tS_m]\\
	&\ge \|\tS_m\|^2 + 8m\sigma^2+2-2(2b-1)\sqrt{2/ \pi}(\|\tS_m\|+\sqrt{2m(1+\sigma^2)})\sigma\\
	&\ge \|\tS_m\|^2 + 8m\sigma^2+2-2(2b-1)\sqrt{2/ \pi}(2m\sigma^2 + \sqrt{2m(1+\sigma^2)}\sigma+0.5)\\
	& = \|\tS_m\|^2+(8-4(2b-1)\sqrt{2/\pi})m\sigma^2 + O(\sqrt m\sigma).
	\end{align*}
	Using $\sum_{m=1}^{n-1}\sqrt m\ge \frac 23(n-1)^{3/2}$, we have
	\begin{align*}
	\mathbb E[\tI_{n}^2]
	& = \ex[\tI_1^2] + \sum_{i=1}^{n-1} \ex[\ex[\tI_{m+1}^2-\tI_m^2|\tS_m]]\\
	& \ge 2+2\sigma^2 +\sum_{m=1}^n (8-4(2b-1)\sqrt {2/\pi})m\sigma^2 + O(\sqrt m\sigma)\\
	&\ge (4-2\sqrt {2/\pi}(2b-1))n^2\sigma^2 + O(n^{3/2}\sigma).
	\end{align*}
	\textbf{Lower bound of $\ex[\tI_n^3]$.} By Jensen's inequality and the fact that $(x+y)^{3/2}\ge x^{3/2}+y^{3/2}$, we have
	\begin{align*}
	\ex[\tI_n^3]\ge \ex[\tI_n^2]^{3/2}\ge (4-\sqrt {2/\pi}(2b-1))^{3/2}n^3\sigma^3 + O(n^{9/4}\sigma^{3/2}).
	\end{align*}
	\textbf{Upper bound of $\ex[\tI_n^4]$.} We have
	\begin{align*}
	\mathbb E[\tI_{m+1}^4 ]
	&= \ex[ B\max(U_m^2, V_m^2) + (1-B) \min(U_m^2, V_m^2)  ]\\
	&= \ex[U_m^2] - (2b-1)\ex[|U_m^2 - V_m^2| ].
	\end{align*}
	from \eqref{eq:fourth:moment}. Using the same arguments in the proof of Theorem~\ref{thm:adaptive}, we have
	\begin{align*}
	\ex[U_m^2] = \ex[\|\tS_m\|^4]+16m\sigma^2\ex[\|\tS_m\|^2] + O(m^2\sigma^2),
	\end{align*}
	and
	\begin{align*}
	\ex[|U_m^2 - V_m^2| ] = 4(2b-1)\ex[\|\tS_m\|^2|\tS_m^\top Y_m|]+O(m^2\sigma^2).
	\end{align*}
	Using the expectation of a folded normal random variable again, we have
	\begin{align*}
	\ex[\|\tS_m\|^2|\tS_m^\top Y_m|] &= \ex[\|\tS_m\|^2|\tS_m^\top Y_m|]\\
	& = \ex[\ex[\|\tS_m\|^2|\tS_m^\top Y_m|]|\tS_m]\\
	& = \sqrt{2/\pi}\sigma \ex[\tI_m^3].
	\end{align*}
	We apply the upper bound of $\ex[\tI_m^2]$ and lower bound of $\ex[\tI_m^3]$, and have
	\begin{align*}
	\mathbb E[\tI_{m+1}^4 ]
	&\le \ex[\tI_m^4]+ 64m^3\sigma^4 - 4(2b-1)\sqrt{2/\pi}(4-\sqrt {2/\pi}(2b-1))^{3/2}m^3\sigma^4 \\
	&\quad +O(m^{9/4}\sigma^{5/2}+m^2\sigma^2).
	\end{align*}
	Therefore, we have
	\begin{align*}
	\ex[\tI_n^4]
	& = \ex[\tI_1^4] + \sum_{m=1}^{n-1} \mathbb E[\tI_{m+1}^4 - \tI_m^4] \\
	& \le (16- 4(2b-1)\sqrt{2/\pi}(4-\sqrt {2/\pi}(2b-1))^{3/2})m^4\sigma^4 + O(m^{13/4}\sigma^{5/2}+m^3\sigma^2)
	\end{align*}
	Assuming $n\sigma^2\to \infty$, we have
	\begin{align*}
	O\Big( \frac{ m^{13/4}\sigma^{5/2}+m^3\sigma^2}{m^4\sigma^4}\Big) = O((n\sigma^2)^{-3/4}+(n\sigma^2)^{-1})\to 0.
	\end{align*}
	Hence
	\begin{align*}
	\lim\sup_{n\to\infty} \frac{\ex[\tI_n^4]}{m^4\sigma^4 }
	\le (16- 4(2b-1)\sqrt{2/\pi}(4-\sqrt {2/\pi}(2b-1))^{3/2}),
	\end{align*}
	as desired.
	
	\subsection{Auxiliary Lemmas}
	\begin{lemma}[Khinchin-Kahane inequality]\label{lem:kahane}
		For $i\in[n]$, let $Y_i= -1, 0, 1$ with probability $p_i, 1-2p_i, p_i$ identically and independently distributed for $p_i\in(0, 1/2)$. Then we have
		\begin{align*}
		\min_{i\in[n]}2p_i \le \inf_{x\in S^{n-1}} \ex[|x^\top Y|]
		\le\sup_{x\in S^{n-1}} \ex[|x^\top Y|]\le \max_{i\in[n]}\sqrt{2p_i},
		\end{align*}
		where $S^{n-1} = \{x\in \mathbb R^n: \|x\|=1\}$.
	\end{lemma}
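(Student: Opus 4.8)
The plan is to reduce both inequalities to elementary moment computations for the scalar variable $W=x^\top Y=\sum_{i=1}^n x_iY_i$. The key observation is that each $Y_i$ takes values in $\{-1,0,1\}$, so $Y_i^2=Y_i^4=\mathbf 1[Y_i\neq 0]$ and hence $\ex[Y_i]=0$ and $\ex[Y_i^2]=\ex[Y_i^4]=2p_i$. Since the $Y_i$ are independent and centered, the cross terms vanish and $\ex[W^2]=\sum_{i=1}^n 2p_ix_i^2$; for $x\in S^{n-1}$ this quantity lies in $[\min_i 2p_i,\ \max_i 2p_i]$. Both bounds then come from comparing the $L^1$, $L^2$ and $L^4$ norms of $W$.

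For the upper bound I would simply invoke Jensen's inequality (equivalently Cauchy--Schwarz): $\ex|W|\le(\ex[W^2])^{1/2}=(\sum_i 2p_ix_i^2)^{1/2}\le(\max_i 2p_i)^{1/2}\|x\|=\max_i\sqrt{2p_i}$, the last step using $\|x\|=1$. This holds for every $x\in S^{n-1}$ and yields the right-hand inequality immediately.

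The lower bound is the heart of the matter, and I would attack it by a moment-comparison (small-ball) argument. Note first that a single coordinate attains the target value exactly, $\ex|e_i^\top Y|=2p_i$, so the coordinate vectors saturate the claimed bound and the estimate must be essentially lossless. To handle a general $x$ I would use the interpolation inequality $\|W\|_2\le\|W\|_1^{1/3}\|W\|_4^{2/3}$, which rearranges to $\ex|W|\ge(\ex[W^2])^{3/2}/(\ex[W^4])^{1/2}$; since $\ex[W^2]\ge\min_i 2p_i$, it remains to control the fourth moment. Expanding and using independence together with $\ex[Y_i^4]=\ex[Y_i^2]=2p_i$ gives
\[
\ex[W^4]=\sum_{i=1}^n 2p_ix_i^4+3\sum_{i\neq j}4p_ip_jx_i^2x_j^2,
\]
which one bounds via $\sum_i x_i^4\le 1$ and $\sum_i 2p_ix_i^2=\ex[W^2]$. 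An equivalent route is the Paley--Zygmund inequality $\pr(|W|\ge\theta(\ex[W^2])^{1/2})\ge(1-\theta^2)^2(\ex[W^2])^2/\ex[W^4]$ combined with $\ex|W|\ge\theta(\ex[W^2])^{1/2}\pr(|W|\ge\theta(\ex[W^2])^{1/2})$ and optimization over $\theta$.

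The main obstacle is making this lower bound sharp, i.e.\ forcing the ratio $(\ex[W^2])^{3/2}/(\ex[W^4])^{1/2}$ all the way up to $\min_i 2p_i$ rather than to a fraction of it. The difficulty is structural: spreading mass over many coordinates drives $W$ toward a central-limit regime, where the $L^1$-to-$L^2$ ratio tends to the Gaussian constant $\sqrt{2/\pi}$ rather than $1$, and a crude use of the fourth moment loses a multiplicative factor. The delicate step is therefore to track the cross terms $\sum_{i\neq j}p_ip_jx_i^2x_j^2$ finely enough, using the three-point structure of $Y_i$ in full, so that the coordinate-vector extremality is recovered and no spread direction undercuts $\min_i 2p_i$; this is exactly where the admissible range of the $p_i$ must be used, and I expect it to be the technically demanding part of the argument.
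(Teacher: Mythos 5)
Your upper bound coincides with the paper's (Cauchy--Schwarz plus $\ex[Y_i^2]=2p_i$), and your fourth-moment formula is correct. But the lower bound --- the only nontrivial claim --- is never actually proved: you set up $\ex[|W|]\ge(\ex[W^2])^{3/2}/(\ex[W^4])^{1/2}$ and then openly defer the ``delicate step''. Concretely, with $\sigma^2:=\ex[W^2]=\sum_i 2p_ix_i^2$, your own bookkeeping ($\sum_i 2p_ix_i^4\le\sigma^2$ and $3\sum_{i\ne j}4p_ip_jx_i^2x_j^2\le 3\sigma^4$) yields at best $\ex[|W|]\ge\sigma^2/\sqrt{1+3\sigma^2}$, which undercuts the target $\min_{i\in[n]}2p_i$ by a constant factor that can approach $2$; Paley--Zygmund fares no better. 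The paper's mechanism is entirely different and avoids moment comparison: it inducts on the dimension, writing $x=(x'\cos\theta,\sin\theta)$ with $x'\in S^{n-1}$, conditioning on $Y_{n+1}\in\{-1,0,1\}$, and exploiting the exact identity $\ex[|Z+s|]=\ex[\max\{|Z|,s\}]$ for symmetric $Z$ and $s\ge0$ (Lemma~\ref{lem:exp:abs}) to reduce everything to a one-variable minimization over $\theta$, concluding $\ex[|x^\top Y|]\ge\min\{\ex[|x'^\top Y|],\,2p_{n+1}\}$ and closing the induction.

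That said, the ``structural difficulty'' you flag at the end is not a technicality: it is fatal to the statement in the generality claimed, and your CLT heuristic locates it exactly. For $p_i\equiv p$ and $x=n^{-1/2}\ones_n$, $x^\top Y$ converges to $\mathcal N(0,2p)$, so $\ex[|x^\top Y|]\to\sqrt{2/\pi}\,\sqrt{2p}<2p$ whenever $p>1/\pi$; already at $n=2$, $p=0.45$, $x=(1,1)/\sqrt 2$, a direct computation gives $\ex[|x^\top Y|]=2\sqrt 2\,p(1-p)\approx 0.70<0.90=\min_i 2p_i$. So coordinate vectors are not extremal for $p$ near $1/2$, and no refinement of your cross-term analysis (nor any other argument) can recover the constant $\min_i 2p_i$ without restricting the $p_i$. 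The paper's own induction slips precisely where this must surface: it asserts that $\theta\mapsto\max\{E\cos\theta,\,(1-2p_{n+1})E\cos\theta+2p_{n+1}\sin\theta\}$ is concave and hence minimized at the endpoints $\theta\in\{0,\pi/2\}$, but this is a maximum of a decreasing and an increasing concave function, minimized at their interior crossing $\tan\theta^*=E$ with value $E/\sqrt{1+E^2}$ --- consistent with the $n=2$ counterexample, and strictly below $\min\{E,2p_{n+1}\}$. Note that the paper applies the lemma only with $p_i=p(1-p)\le 1/4<1/\pi$, where the bound is plausibly salvageable; but even in that regime the route with a chance of giving the sharp constant is the conditioning/induction argument, not $L^1$--$L^2$--$L^4$ interpolation, whose constant inevitably degrades toward $\sqrt{2/\pi}$ in the spread-out (central-limit) regime you correctly identify.
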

	
	\begin{proof}
		{\bf Upper bound.} For $\|x\|=1$, we have
		\begin{align*}
		\ex[|x^\top Y|]^2\le \ex[|x^\top Y|^2] = \sum_{i=1}^n x_i^2 \ex[|Y_i|^2] = \|x\|^2\max_{i\in[n]}\ex[Y_i^2]=\max_{i\in[n]}\ex[Y_i^2].
		\end{align*}
		Hence $\ex[|x^\top Y|]\le \max_{i\in[n]}\sqrt {\ex[Y_i^2]}=\max_{i\in[n]}\sqrt{2p_i}$. \\
		
		{\bf Lower Bound.}
		Let us define $S^{n-1}_+ = \{x\in S^{n-1}: \forall i\in[n], x_i\ge 0 \}$.
		By symmetry of $Y_i$, if suffices to consider the infimum for $x$ over $S^{n-1}_+$ to avoid loss of generality. We claim that if $j=\arg\min_{i\in[n]}2p_i$, then the minimum is achieved at $x = e_j$. We will prove this by induction. The claim is clearly correct when $n=1$. Now let us consider the case $n+1$, given the statement is true for $n$. In other words, it suffices to show that
		\begin{align*}
		\inf_{x\in S^{n}} \ex[|x^\top Y|]=
		\inf_{x\in S^{n-1}}\inf_{\theta\in[0,\pi/2]}\ex[|x^\top Y\cos\theta+ Y_{n+1}\sin\theta|]\ge \min_{i\in[n+1]}2p_i.
		\end{align*}
		given
		\begin{align*}
		\inf_{x\in S^{n-1}} \ex[|x^\top Y|]\ge \min_{i\in[n]}2p_i.
		\end{align*}
		We note that for $x\in S^{n-1}_+$, $\|(x\cos\theta, \sin\theta)\|^2 = \|x\|^2\cos^2\theta +\sin^2\theta = 1$. We have
		\begin{align*}
		\ex[|x^\top Y\cos\theta  + Y_{n+1}\sin\theta|]
		=& \sum_{y=-1}^1 \ex[|x^\top Y\cos\theta  + Y_{n+1}\sin\theta| | Y_{n+1}=y] \pr(Y_{n+1} = y)\\
		=& (1-2p_{n+1} )\cos\theta\ex[|x^\top Y|] + p_{n+1} \ex[|x^\top Y\cos\theta+\sin\theta|]\\
		&+ p_{n+1}\ex[|x^\top Y\cos\theta-\sin\theta|].
		\end{align*}
		By symmetry of $x^\top Y$, we have
		\begin{align*}
		\ex[|x^\top Y\cos\theta-\sin\theta|] = \ex[|-x^\top Y\cos\theta-\sin\theta|]=\ex[|x^\top Y\cos\theta+\sin\theta|].
		\end{align*}
		By Lemma~\ref{lem:exp:abs}, we have
		\begin{align*}
		\ex[|x^\top Y\cos\theta+\sin\theta|] = \ex[\max\{|x^\top Y|\cos\theta, \sin\theta\}]
		\ge \max\{ \ex[|x^\top Y|]\cos\theta, \sin\theta \}.
		\end{align*}
		Therefore,
		\begin{align*}
		\ex[|x^\top Y\cos\theta  + Y_{n+1}\sin\theta|]&=
		(1-2p_{n+1} )\ex[|x^\top Y|]\cos\theta + 2p_{n+1} \ex[|x^\top Y\cos\theta+\sin\theta|]\\
		&\ge (1-2p_{n+1} )\ex[|x^\top Y|]\cos\theta + 2p_{n+1} \max\{ \ex[|x^\top Y]\cos\theta, \sin\theta \}\\
		& = \max\{ \ex[|x^\top Y|]\cos\theta, (1-2p_{n+1})\ex[|x^\top Y|]\cos\theta+2p_{n+1}\sin\theta \}.
		\end{align*}
		The last line is a concave function corresponding to the variable $\theta\in[0,\pi/2]$. For every $x\in S^{n-1}_+$, it achieves the minimum when either $\theta=0$ or $\theta=1$. Thus for every $x\in S^n$ and $\theta\in[0,\pi/2]$,
		\begin{align*}
		\ex[|x^\top Y\cos\theta  + Y_{n+1}\sin\theta|]&\ge
		\min\{ \max\{ \ex[|x^\top Y|], (1-2p_{n+1} )\ex[|x^\top Y|] \}, \max\{ 0, 2p_{n+1}  \} \} \\
		&= \min\{ \ex[|x^\top Y|], 2p_{n+1}  \}.
		\end{align*}
		By the inductive assumption, $\ex[|x^\top Y|]\ge \min_{i\in[n]} 2p_i$, so
		\begin{align*}
		\inf_{x\in S^{n}} \ex[|x^\top Y|]=
		\inf_{x\in S^{n-1}}\inf_{\theta\in[0,\pi/2]}\ex[|x^\top Y\cos\theta+ Y_{n+1}\sin\theta|]\ge \min_{i\in[n+1]}2p_i,
		\end{align*}
		which finishes the proof.
	\end{proof}

	\begin{lemma}\label{lem:exp:abs}
		Suppose $Y$ is a symmetric random variable and $x\ge 0$ is fixed, then
		\begin{align*}
		\ex[|Y+x|]\ge \max\{ \ex[|Y|], x \}.
		\end{align*}
	\end{lemma}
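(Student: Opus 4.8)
The plan is to reduce everything to a single pointwise identity and then take expectations. Since $Y$ is symmetric, $-Y$ has the same distribution as $Y$, so
\[
\ex[|Y+x|] = \ex[|-Y+x|] = \ex[|Y-x|].
\]
Thus it suffices to understand the average of the pair $|Y+x|$ and $|Y-x|$, which is where the symmetry assumption does all the work.

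First I would establish the deterministic identity, valid for every real $y$ and every $x\ge 0$,
\[
|y+x| + |y-x| = 2\max\{|y|, x\}.
\]
This is verified by a short casework. If $|y|\ge x$, then $y+x$ and $y-x$ both carry the sign of $y$, so the left-hand side collapses to $2|y|$; if $|y|<x$, then $y+x>0$ while $y-x<0$, so the left-hand side equals $(y+x)+(x-y)=2x$. In either case the two sides agree.

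Taking expectations of this identity and invoking the symmetry relation $\ex[|Y+x|]=\ex[|Y-x|]$, I obtain
\[
2\,\ex[|Y+x|] = \ex[|Y+x|] + \ex[|Y-x|] = 2\,\ex[\max\{|Y|, x\}],
\]
hence $\ex[|Y+x|] = \ex[\max\{|Y|, x\}]$. The conclusion then follows at once: pointwise we have $\max\{|Y|,x\}\ge |Y|$ and $\max\{|Y|,x\}\ge x$, so taking expectations of each inequality gives $\ex[|Y+x|]\ge \ex[|Y|]$ and $\ex[|Y+x|]\ge x$, and combining the two yields $\ex[|Y+x|]\ge \max\{\ex[|Y|], x\}$.

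There is essentially no obstacle in this argument; the only point requiring any care is the casework for the pointwise identity, which is routine. I note that the stronger equality $\ex[|Y+x|]=\ex[\max\{|Y|,x\}]$ obtained along the way is precisely what is invoked in the proof of Lemma~\ref{lem:kahane}, where $\ex[|x^\top Y\cos\theta+\sin\theta|]$ is rewritten as $\ex[\max\{|x^\top Y|\cos\theta,\sin\theta\}]$.
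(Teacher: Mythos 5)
Your proof is correct, and it is a cleaner and slightly stronger variant of the paper's argument. Both proofs rest on the same symmetry idea of pairing the contributions of $y$ and $-y$, but the paper executes it by assuming $Y$ is discrete, writing $\ex[|Y+x|]=x\pr(Y=0)+\sum_{y>0}\bigl(y+x+|x-y|\bigr)\pr(Y=y)$, and then applying the two one-sided bounds $|x-y|\ge y-x$ and $|x-y|\ge x-y$ separately to obtain the two halves of the maximum (with a hand-wave toward ``other cases,'' and in fact a sign typo in its second chain: the paired term should be $(y+x+x-y)=2x$, which gives the claimed $=x$ via $\pr(Y>0)=\pr(Y<0)$). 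You instead isolate the exact pointwise identity $|y+x|+|y-x|=2\max\{|y|,x\}$ and average it against the symmetry relation $\ex[|Y+x|]=\ex[|Y-x|]$; this works verbatim for an arbitrary symmetric distribution, needs no discreteness assumption, and yields the equality $\ex[|Y+x|]=\ex[\max\{|Y|,x\}]$, from which the stated inequality follows by monotonicity of expectation. Your closing observation is also on target: in the proof of Lemma~\ref{lem:kahane} the paper writes the \emph{equality} $\ex[|x^\top Y\cos\theta+\sin\theta|]=\ex[\max\{|x^\top Y|\cos\theta,\sin\theta\}]$ and attributes it to Lemma~\ref{lem:exp:abs}, even though the lemma as stated provides only the weaker bound $\ge\max\{\ex[|x^\top Y|]\cos\theta,\sin\theta\}$; the identity you establish along the way is precisely what justifies that intermediate step, so your route actually repairs a small gap in how the lemma is invoked downstream.
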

	
	\begin{proof}
		We assume $Y$ is discrete. Other cases simply follow from the arguments below.
		\begin{align*}
		\ex[|Y+x|] &=\sum_y |y+x| \pr(Y=y) = x\pr(Y=0) + \sum_{y>0} (y+x+|-y+x|)\pr (Y=y)\\
		&\ge x\pr(Y=0) + \sum_{y>0} (y+x+y-x)\pr (Y=y)\ge \sum_{y>0} 2y\pr(Y=y)=\ex[|Y|].
		\end{align*}
		Additionally,
		\begin{align*}
		\ex[|Y+x|] &=x\pr(Y=0) + \sum_{y>0} (y+x+|-y+x|)\pr (Y=y)\\
		&\ge x\pr(Y=0) + \sum_{y>0} (y+x-x+y)\pr (Y=y)=x.
		\end{align*}
		The proof is complete.
	\end{proof}

	\bibliographystyle{imsart-number}
	\bibliography{refs}
	
\end{document}